\newcommand\Tstrut{\rule{0pt}{2.9ex}}         
\newcommand\TstrutBig{\rule{0pt}{5.4ex}} 
\newenvironment{mymathbox}
{\par\smallskip\centering\begin{lrbox}{0}%
\begin{minipage}[c]{0.95\textwidth}}
{\end{minipage}\end{lrbox}%
\framebox[\textwidth]{\usebox{0}}%
\par\medskip
\ignorespacesafterend}
\newcommand{\specialcell}[2][c]{%
  \begin{tabular}[#1]{@{}c@{}}#2\end{tabular}}
\newcommand{\specialcellleft}[2][c]{%
\begin{tabular}[#1]{@{}l@{}}#2\end{tabular}
}
\newtheorem*{rep@theorem}{\rep@title}
\newcommand{\newreptheorem}[2]{%
\newenvironment{rep#1}[1]{%
 \def\rep@title{#2 \ref{##1}}%
 \begin{rep@theorem}}%
 {\end{rep@theorem}}}
\newtheorem{theorem}{Theorem}[section]
\newtheorem{definition}[theorem]{Definition}
\newtheorem{problem}[theorem]{Problem}
\title{Determining Tournament Payout Structures\\ for Daily Fantasy Sports}
\author{
Christopher Musco \thanks{Massachusetts Institute of Technology. Worked completed while at Yahoo Labs. Email:  \texttt{cpmusco@mit.edu}}
\and
Maxim Sviridenko \thanks{Yahoo Labs. Email:  \texttt{sviri@yahoo-inc.com}}
\and
Justin Thaler \thanks{Georgetown University. Worked completed while at Yahoo Labs. Email: \texttt{jthaler@fas.harvard.edu}}
}
\begin{document}

\maketitle

\begin{abstract}
With an exploding global market and the recent introduction of online cash prize tournaments, fantasy sports contests are quickly becoming a central part of the social gaming and sports industries. For sports fans and online media companies, fantasy sports contests are an opportunity for large financial gains. However, they present a host of technical challenges that arise from the complexities involved in running a web-scale, prize driven fantasy sports platform.

We initiate the study of these challenges by examining one concrete problem in particular: how to algorithmically generate contest payout structures that are 1) economically motivating and appealing to contestants and 2) reasonably structured and succinctly representable. We formalize this problem and present a general two-staged approach for producing satisfying payout structures given constraints on contest size, entry fee, prize bucketing, etc.

We then propose and evaluate several potential algorithms for solving the payout problem efficiently, including methods based on dynamic programming, integer programming, and heuristic techniques. Experimental results show that a carefully designed heuristic scales very well, even to contests with over 100,000 prize winners. 

Our approach extends beyond fantasy sports -- it is suitable for generating engaging payout structures for any contest with a large number of entrants and a large number of prize winners, including other massive online games, poker tournaments, and real-life sports tournaments.
\end{abstract}
\thispagestyle{empty}
\clearpage
\setcounter{page}{1}

\section{Introduction}
\vspace{-.75em}
In many competitions, a large number of entrants compete against each other and are then {ordered} based on performance. Prize money is distributed to the entrants based on their rank in the order, with higher ranks receiving more money than lower ranks. 
The question that we are interested in is: how should prize money be distributed among the entrants?  That is, how much money should 
 go to the winner of the contest? How much to 2nd place? How much to 1,128th place? 

\vspace{-1em}
\subsection{Motivation} 
\vspace{-.65em}
We became interested in this problem in the context of daily \emph{fantasy sports}\footnote{In fantasy sports, participants build a team of real-world athletes, and the team earns points based on the actual real-world performance of the athletes. 
Traditional fantasy sports competitions run for an entire professional sports season; daily sports competitions typically run for just a single day or week.
\vspace{-.25em}}, a growing sector of online fantasy sports competitions  where users pay a fee to enter and can win real prize money. 

Daily fantasy sports were legalized in the United States in 2006 by the Unlawful Internet Gambling Enforcement Act, which classified them as \emph{games of skill}. Since then, the industry has been dominated by two companies,
FanDuel and DraftKings. In 2015, these companies collected a combined \$3 billion in entry fees, which generated an estimated \$280 million in revenue \cite{updatedrevenues}. Analysts project continued industry growth as daily fantasy attracts a growing portion of the 57 million people who actively play traditional fantasy sports  \cite{eilersResearch,fstaDemo}.

Yahoo launched a daily fantasy sports product in July 2015. Work on the contest management portion of this product has led to interesting economic and algorithmic challenges involving player pricing, revenue maximization, fill-rate prediction and of course, payout structure generation. 

\smallskip
\noindent  \textbf{The Importance of Payout Structure.}
Payout structure has been identified as an important factor in determining how appealing a competition is to users. 
Payouts are regularly discussed on forums and websites devoted to the daily fantasy sports industry, and these structures have a substantial effect on the strategies that contest entrants pursue (see, e.g., \cite{roto1,roto2,roto3,roto4,roto5}). 

Furthermore, considerable attention has been devoted to payout structures in related contexts. For example, popular articles discuss the payout structures used in World Series of Poker (WSOP) events (see Section \ref{sec:poker} for details), and at least one prominent poker tournament director has said that ``payout structure could determine whether or not a player comes back to the game.'' \cite{poker2}. 

\vspace{-1em}
\subsection{Payouts in Daily Fantasy Sports}
\vspace{-.6em}
For some types of fantasy sports contests, the appropriate payout structure is obvious. For example, in a ``Double Up'' contest, roughly half of the entrants win back twice the entry fee, while the other half wins nothing. 
However, some of the most popular contests are analogous to real-world golf and poker tournaments, in which the winner should win a very large amount of money, second place should win slightly less, and so on. We refer to such competitions as \emph{tournaments}.  

Manually determining a reasonable payout structure for each tournament offered is unprincipled and laborious, especially when prize pools and contestant counts vary widely across contests. Furthermore, given typical constraints, manually constructing even a single payout structure is difficult, even ``virtually impossible'' in the words of a WSOP Tournament Director \cite{poker2}. 
The challenge is amplified for online tournaments where the number of contestants and prizes awarded can be orders of magnitude larger than in traditional gaming and sporting: FanDuel and DraftKings run contests with hundreds of thousands of entrants and up to \$15 million in prizes.
Accordingly, our goal is to develop efficient algorithms for automatically determining payouts. 

\vspace{-1em}
\subsection{Summary of Contributions}
\vspace{-.6em}
Our contributions are two-fold. First, we (partially) formalize the properties that a payout structure for a daily fantasy tournament should satisfy.
Second, we present several algorithms for calculating such payout structures based on a general two stage framework. In particular, we present an efficient heuristic that scales to extremely large tournaments and is currently in production at Yahoo.

All methods are applicable beyond fantasy sports to any large tournament, including those for golf, fishing, poker, and online gaming (where very large prize pools are often crowd-funded \cite{crowdfunding}).

\section{Payout Structure Requirements}
\vspace{-.5em}

\label{sec:requirements}
We begin by formalizing properties that we want in a payout structure.
A few are self-evident.
\vspace{-.3em}
\begin{itemize}[itemsep=-.1em]
\item \textbf{(Prize Pool Requirement)} The total amount of money paid to the users must be equal to the Total Prize Pool. This is a hard requirement, for legal reasons: if a contest administrator says it will pay out \$1 million, it must pay out exactly \$1 million. 
\item\textbf{(Monotonicity Requirement)} The prizes should satisfy \emph{monotonicity}. First place should win at least as much as second place, who should win at least as much as third, and so on.
\end{itemize}
\vspace{-.3em}
\noindent There are less obvious requirements as well.
\vspace{-.3em}
\begin{itemize}[itemsep=-.1em]
\item \textbf{(Bucketing Requirement)} To concisely publish payout structures,  prizes should fall into a manageable set of ``buckets'' such that all users within one bucket get the same prize. It is not desirable to pay out thousands of distinct prize amounts.

\item \textbf{(Nice Number Requirement)} Prizes should be paid in amounts that are aesthetically pleasing. Paying a prize of $\$1,000$ is preferable to $\$1,012.11$, or even to $\$1,012$.  

\item \textbf{(Minimum Payout Requirement)}: It is typically unsatisfying to win an amount smaller than the entry fee paid to enter the contest. So any place awarded a non-zero amount should receive at least some minimum amount $E$. Typically, we set $E$ to be 1.5 times the entry fee.
\end{itemize}
\vspace{-.3em}
\noindent Finally, the following characteristic is desirable:
\vspace{-.3em}
\begin{itemize}[itemsep=-.1em]
\item \textbf{(Monotonic  Bucket Sizes)}: Buckets should increase in size when we move from higher ranks to lower ranks.
For example, it is undesirable for 348 users to receive a payout of \$20, 2 users to \$15, and 642 users to receive \$10. 
\end{itemize}

Of course, it is not enough to simply find a payout structure satisfying all of the requirements above. For example,
a winner-take-all payout structure satisfies all requirements, but is not appealing to entrants. 
Thus, our algorithms proceed in two stages. We first determine an ``initial'', or ideal, payout structure that
captures some intuitive notions of attractiveness and fairness. We then modify the initial payout structure \emph{as little as possible} to satisfy the requirements. 

Before discussing this process, three final remarks regarding requirements are in order.

\medskip
\noindent \textbf{Small Contests.} In tournaments with few entrants, it is acceptable to pay each winner a distinct prize. In this case, the bucketing requirement is superfluous and choosing payouts is much easier. 

\medskip
\noindent \textbf{Handling Ties.} In daily fantasy sports and other domains such as poker and golf, entrants who tie for a rank typically split all prizes due to those entrants equally. Accordingly, even if the initial payout structure satisfies the Nice Number Requirement, the actual payouts may not. However,
this is inconsequential: the purpose of the Nice Number Requirement is to ensure that aesthetically pleasing payouts are published, not to ensure that aesthetically pleasing payouts are received. 

\medskip \noindent \textbf{What are ``Nice Numbers''?}
There are many ways to define nice numbers, i.e., the numbers that we deem aesthetically pleasing. 
Our algorithms will work with any such definition, as long as it comes with an algorithm that, given a number $a \in \mathbb{R}$,
can efficiently return the largest nice number less than or equal to $a$. 
Here we give one possible definition.

\begin{definition}[Nice Number]
\label{def:nice_numbers}
A nonnegative integer $X\in Z_+$ is a ``nice number'' if $X=A\cdot 10^K$ where $K, A \in Z_+$, and $A\le 1000$ satisfies all of the following properties:
\vspace{-.3em}
\begin{enumerate}[itemsep=-.3em]
\item  if $A\ge 10$ then $A$ is a multiple of $5$;
\item  if $A\ge 100$ then $A$ is a multiple of $25$;
\item if $A\ge 250$ then $A$ is a multiple of $50$.
\end{enumerate}
\end{definition}
\vspace{-.3em}
\noindent Under Definition \ref{def:nice_numbers}, the nice numbers less than or equal to 1000 are:
\vspace{-.4em}
\begin{align*}
\{1, 2, 3, \dots, 10, 15, 20, \dots, 95,\allowbreak 100,\allowbreak 125, 150,
\dots, 225, 250, 300, 350, \dots, 950, 1000\}
\end{align*}
\vspace{-1.75em}

\noindent The nice numbers between 1000 and 3000 are
$\{1000, 1250, 1500, 1750, 2000, 2250, 2500, 3000\}$.

\subsection{Prior Work} 
\vspace{-.5em}
\label{sec:prior_work}
While our exact requirements have not been previously formalized, observation of published payout structures for a variety of contests suggests that similar guiding principals are standard. Moreover, manually determining payouts to match these requirements is, anecdotally, not an easy task. For example, it is not hard to find partially successful attempts at satisfying nice number constraints.
\vspace{-1em}
\begin{figure}[h]
\centering
\includegraphics[width=.55\textwidth]{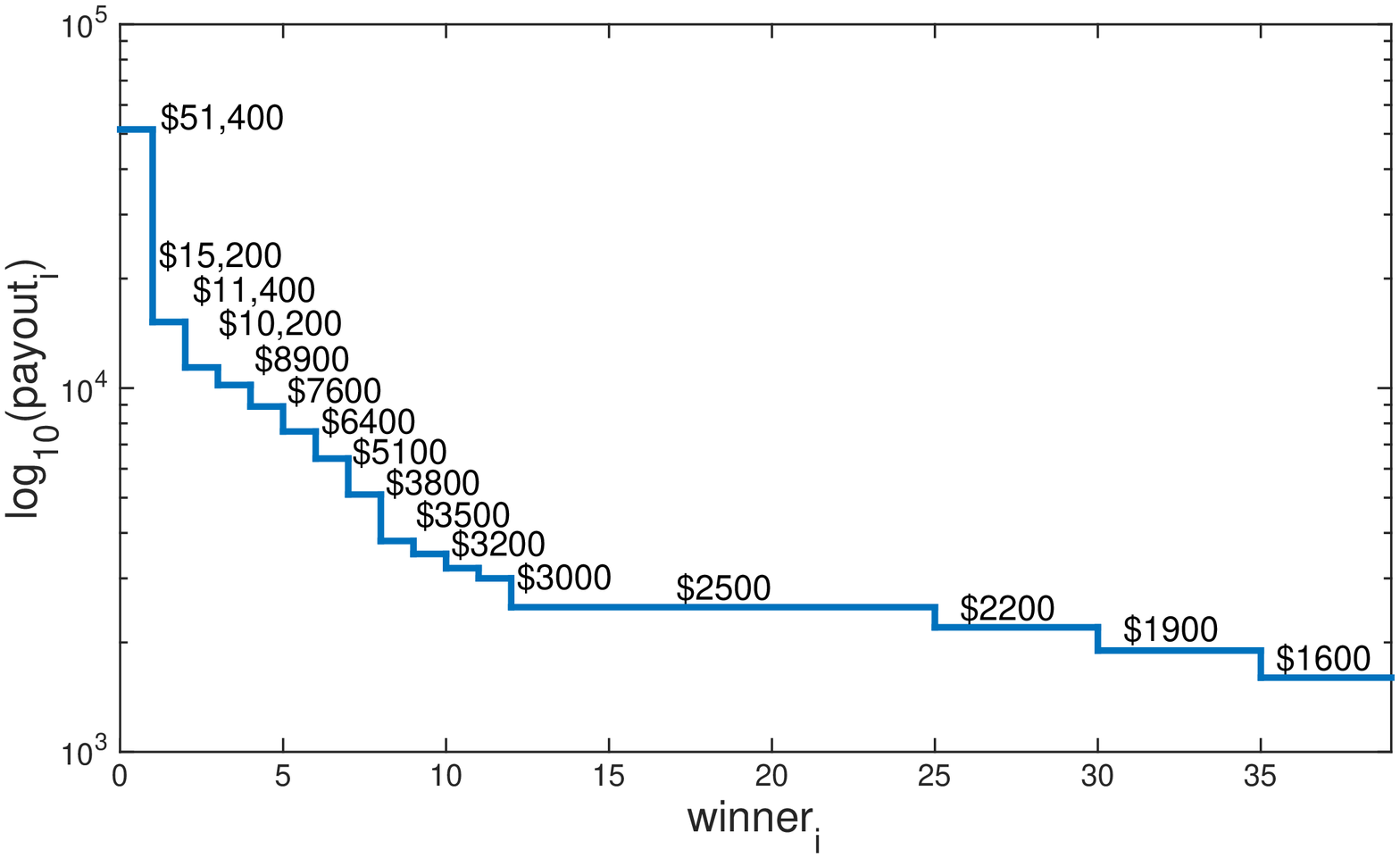}
\vspace{-1em}
\caption{Bassmaster Open payouts are not quite ``nice numbers'' \protect\cite{bassmasterInfo}.}
\label{fig:bmaster_failure}
\vspace{-.75em}
\end{figure}

For example, consider the 2015 Bassmaster Open fishing tournament, which paid out a top prize of \$51,400, instead of the rounder \$51,000 or \$50,000 (see Figure \ref{fig:bmaster_failure}). 
The Bassmaster payout structure also violates our constraint on bucket size monotonicity. 
Similar ``partially optimized'' payout structures can be found for poker \cite{pokerPayoutsFailure,poker2}, golf  \cite{usopenprizes}, and other tournaments.

\smallskip \noindent
\textbf{Payout Structures for Poker.} \label{sec:poker}
Several popular articles describe the efforts of the World Series Of Poker (WSOP),  in conjunction with Adam Schwartz of Washington and Lee University,
to develop an algorithm for determining payout structures for their annual ``Main Event'' \cite{poker1,poker2}.
Schwartz's solution, which was based on Newton's Method, was combined with manual intervention to determine the final payout structure in 2005.
WSOP still utilized considerable manual intervention when determining payouts until an update of the algorithm in 2009 made doing so unnecessary.

While a formal description of the algorithm in unavailable, it appears to be very different from ours; Schwartz has stated that their solution attempts to obtain payout structures with a ``constant second derivative'' (our solutions do not satisfy this property). Their work also departs qualitatively from ours in that they do not consider explicit nice number or bucket size requirements.

\smallskip \noindent
\textbf{Piecewise Function Approximation.}
As mentioned, our payout structure algorithms proceed in two stages. An initial payout curve is generated with a separate payout for every winning position. The curve is then modified to fit our constraints, which requires bucketing payouts so that a limited number of distinct prizes are paid. We seek the bucketed curve closest to our initial payout curve.

This curve modification task is similar to the well studied problem of optimal approximation of a function by a histogram (i.e., a piecewise constant function). This problem has received considerable attention, especially for applications to database query optimization \cite{Ioannidis:2003}. While a number of algorithmic results give exact and approximate solutions \cite{Jagadish:1998,Guha:2006}, unfortunately no known solutions easily adapt to handle our additional constraints beyond bucketing.


\vspace{-1em}
\section{Our Solution} 
\vspace{-.5em}
Let $B$ denote the total prize pool, $N$ denote the number of entrants who should win a non-zero amount, and  $P_i$ denote the prize that we decide to award to place $i$. In general, $B$, $P_1$ and $N$ are user-defined parameters. 
$P_1$ can vary widely in fantasy sports contests, anywhere from $.05\cdot B$ to nearly $.5\cdot B$, but $.15 \cdot B$ is a standard choice (i.e., first place wins 15\% of the prize pool).  Typically, $N$ is roughly 25\% of the total number of contest entrants, although it varies as well.

Our solution proceeds in two stages. We first determine an initial payout structure that does not necessarily satisfy  Bucketing or Nice Number requirements. The initial payout for place $i$ is denoted $\pi_i$. We then modify the payouts to satisfy our requirements. In particular, we search for feasible payouts $P_1, \ldots, P_N$ that are nearest to our initial payouts in terms of sum-of-squared error.

\vspace{-1em}
\subsection{Determining the Initial Payout Structure}
\label{sec:initial_structure} 
\vspace{-.5em}
First, to satisfy the Minimum Payout Requirement, we start by giving each winning place $E$ dollars. This leaves $B-N \cdot E$ additional dollars to disperse among the $N$ winners. How should we do this?

We have decided that it is best to disperse the remaining budget according to a \emph{power law}. That is, the amount of the budget that is given to place $i$ should be proportional to $1/i^{\alpha}$ for some fixed constant $\alpha > 0$. It is easy to see that for \emph{any} positive value of $\alpha$, the resulting payouts will satisfy the Monotonicity Requirement, but there is a unique value of $\alpha$ ensuring that the payouts sum to exactly the Total Prize Pool $B$. 
Specifically, we need to choose the exponent $\alpha$ to satisfy 
\vspace{-.5em}
$$B-N\cdot E=\sum_{i=1}^N\frac{P_1-E}{i^\alpha}.$$ 
\vspace{-1.25em}

\noindent We can efficiently solve this equation for $\alpha$ to additive error less than $.01$ via binary search. We then define the ideal payout to place $i$ 
to be $\pi_i := E +  \frac{P_1-E}{i^\alpha}$. This definition ensures both that first place gets paid exactly $\$P_1$, and that the sum of all of the tentative payouts is exactly $B$.

\vspace{-1em}
\begin{figure}[h]
\centering
\begin{subfigure}{0.48\textwidth}
\centering
\includegraphics[width=.75\textwidth]{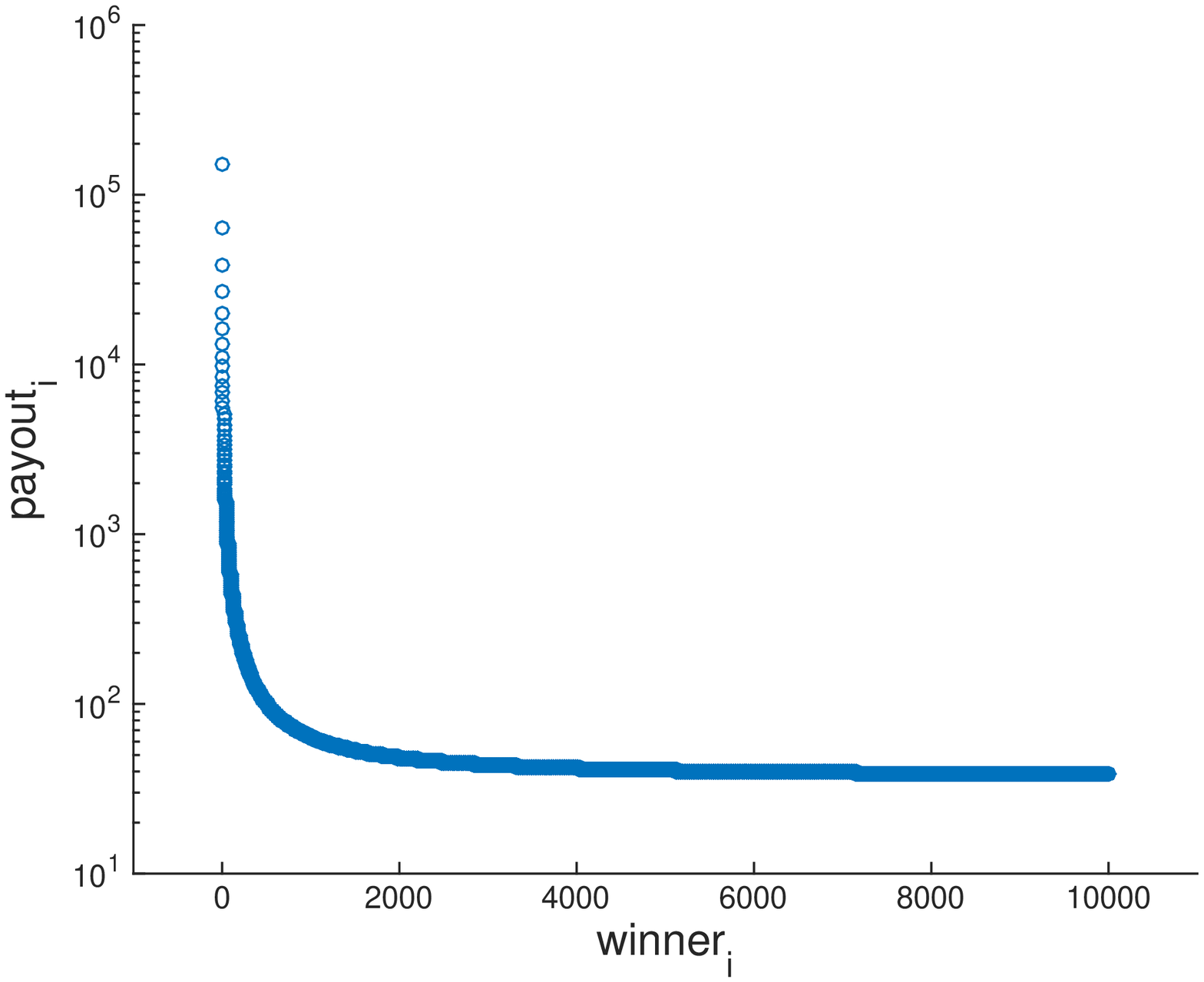} 
\vspace{-.5em}
\caption{Ideal payouts using power law method.}
\label{fig1}
\end{subfigure}%
\hfill
\begin{subfigure}{0.48\textwidth}
\centering
\includegraphics[width=.75\textwidth]{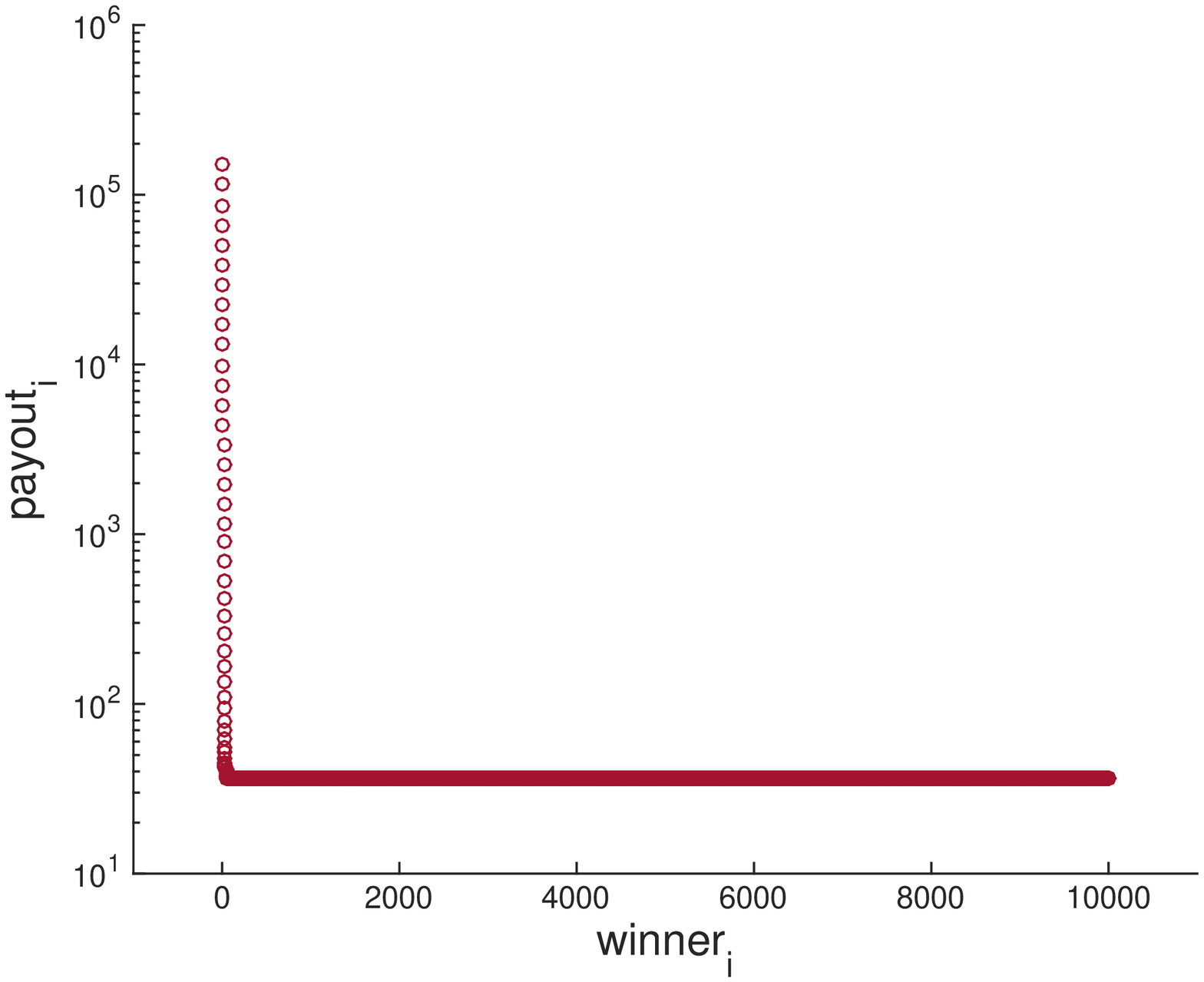}
\vspace{-.5em}
\caption{Ideal payouts using exponential distribution.}
\label{fig2}
\end{subfigure}
\vspace{-.25em}
\caption{Possible initial payout structure when $N=\text{10,000}$, $P_1=\text{150,000}$, and $B=\text{1 million}$.}
\label{examplefig}
\vspace{-1em}
\end{figure}

Why use a power law? Empirically, a power law ensures that as we move from 1st place to 2nd to 3rd and so on, payouts drop off at a nice pace: fast enough that top finishers are richly rewarded relative to the rest of the pack, but slow enough that users in, say, the 10th percentile still win a lot of money. A power law curve also encourages increased prize differences between higher places, a property cited as a desirable by WSOP organizers \cite{poker2}.


For illustration, consider a tournament where 40,000 entrants vie for \$1 million. 
If 1st place wins 15\% of the prize pool and 25\% of entrants should win a non-zero prize, then $P_1 =\$150,000$ and $N=10,000$. 
Figure \ref{fig1} reveals the initial payouts determined by our power law method. 

Figure \ref{fig2} reveals what initial payouts would be if we used an exponential distribution instead, with prizes proportional to $1/\alpha^i$ rather than $1/i^\alpha$. Such distributions are a popular choice for smaller tournaments, but the plot reveals that they yield much more top-heavy payouts than a power law approach. 
In our example, only the top few dozen places receive more than the minimum prize.

\vspace{-.75em}
\begin{figure}[h]
\centering
\begin{subfigure}{0.48\textwidth}
\centering
\includegraphics[width=.75\textwidth]{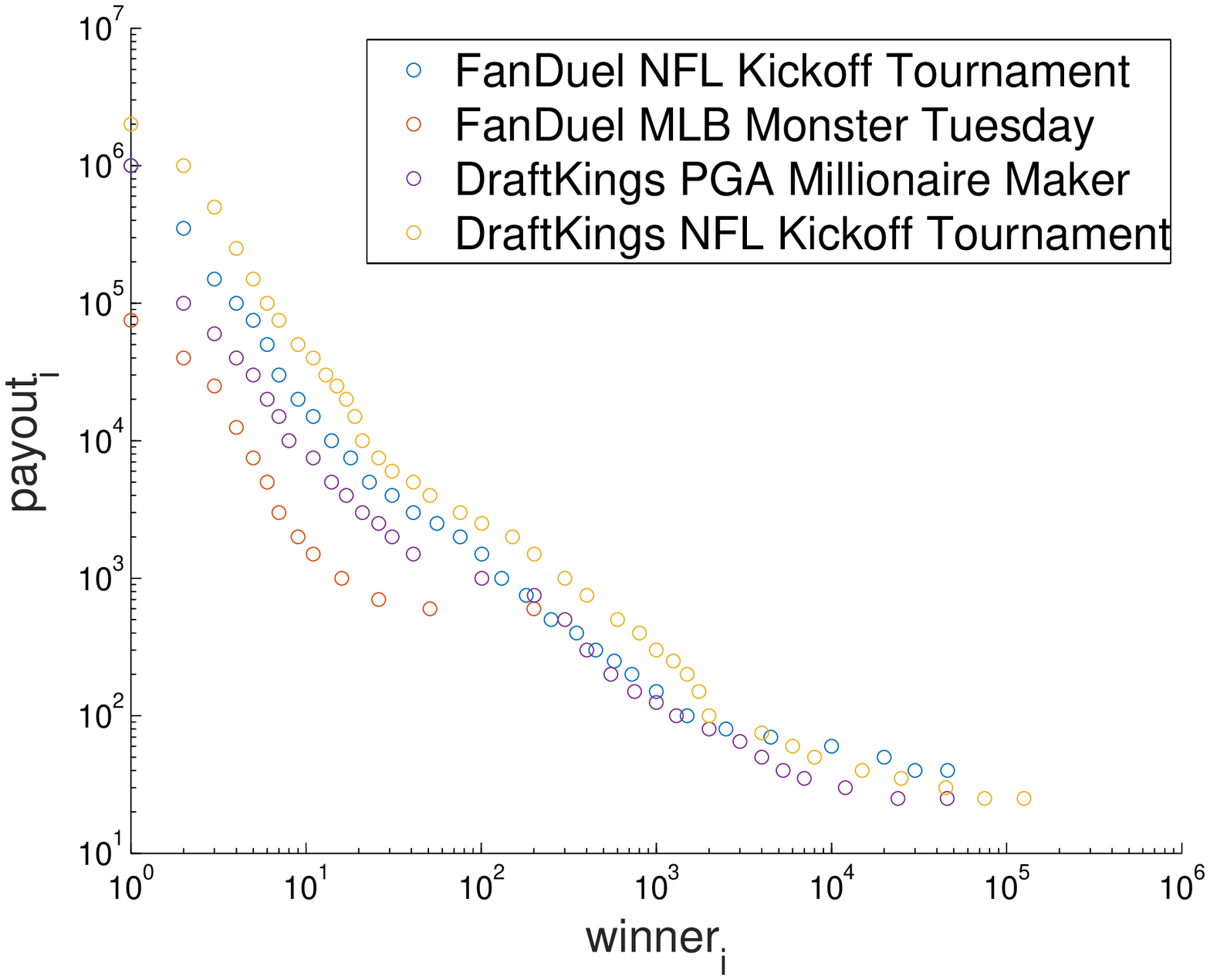} 
\vspace{-.5em}
\caption{Large fantasy sports tournaments.}
\label{empirical_payouts:sub1}
\end{subfigure}%
\hfill
\begin{subfigure}{0.48\textwidth}
\centering
\includegraphics[width=.75\textwidth]{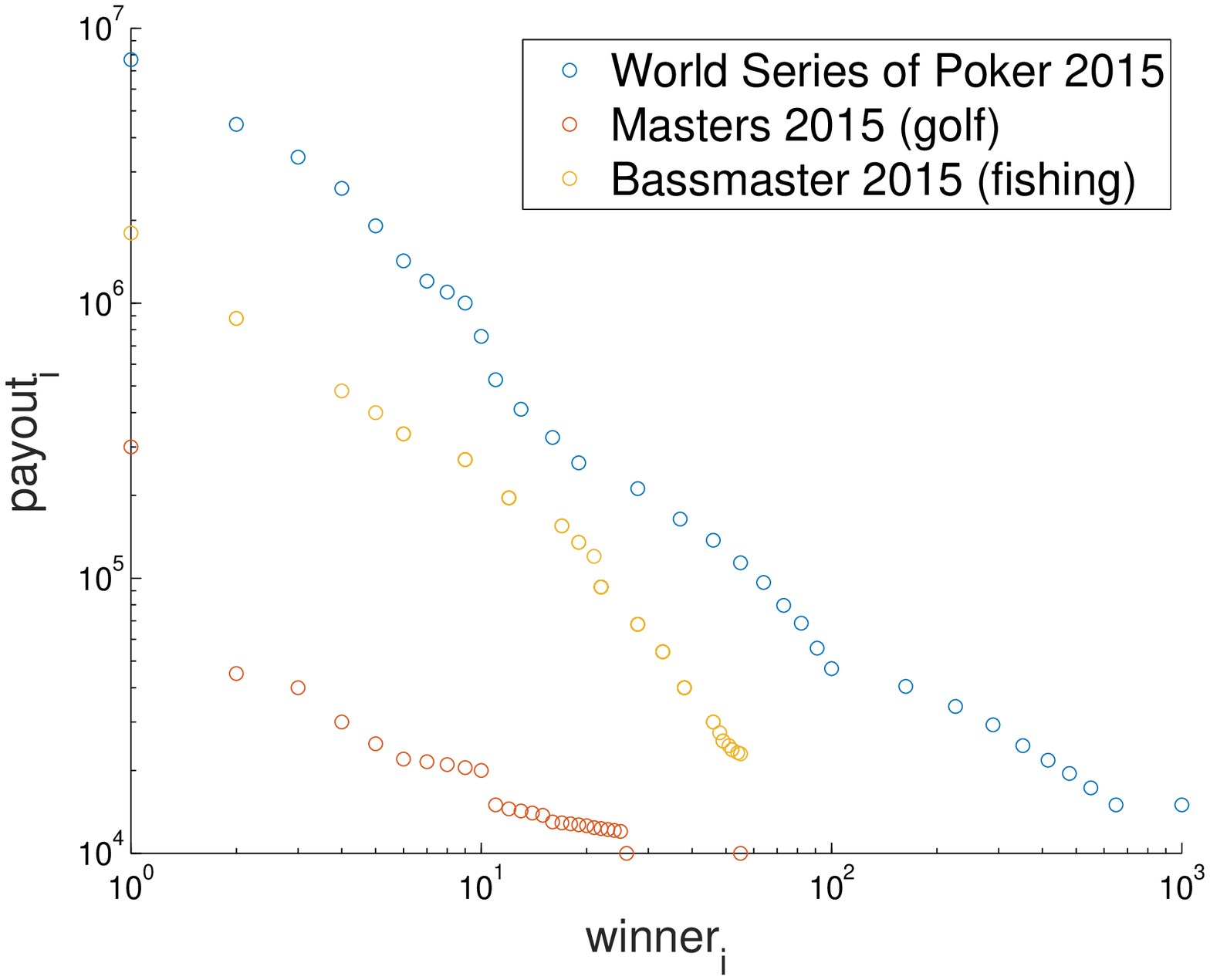}
\vspace{-.5em}
\caption{Other well known tournaments.}
\label{empirical_payouts:sub2}
\end{subfigure}
\vspace{-.7em}
\caption{Log plots of existing tournament payout structures indicate linear structure.}
\label{fig:empirical_payouts}
\vspace{-.75em}
\end{figure}

As further justification, we check in Figure \ref{fig:empirical_payouts} that a power law roughly models payouts for existing fantasy sports contests and other large tournaments. Since $\log(i^\alpha) = \alpha\log(i)$, plotting payouts with both axes on a logarithmic scale will reveal a linear trend when the payout distribution approximates a power law. This trend is visible in all of the tournaments checked in Figure  \ref{fig:empirical_payouts}.

Obtaining quantitative goodness-of-fit metrics for the power law can be difficult \cite{clauset2009power} especially given the heavy ``value binning'' present in payout structures \cite{virkar2014power}. Nevertheless, standard fitting routines \cite{powerLawCode} yield an average $p$-value of $.237$ for the Figure \ref{fig:empirical_payouts} tournaments. This value is in line with calculated $p$-values for well known empirical power law distributions with binned values \cite{virkar2014power}.

Overall, the power law's simplicity, empirical effectiveness, and historical justification make it an ideal candidate for generating initial prize values.

\vspace{-1em}
\subsection{Satisfying the Remaining Requirements}
\label{sub:rounding}
\vspace{-.5em}
Now that we have an initial payout structure $(\pi_1, \dots, \pi_N)$, our goal becomes to identify a related payout structure that is ``close'' to this ideal one, but satisfies our Bucketing and Nice Number requirements. To measure ``closeness'' of payout structures we have found that (squared) Euclidean distance works well in practice: we define the distance between two payout structures $(P_1, \dots, P_N)$ and $(Q_1, \dots, Q_N)$ to be $\sum_{i=1}^N (P_i - Q_i)^2$. 
This metric is equivalent to the popular ``V-optimal'' measure for approximation of a function by a general histogram \cite{Ioannidis:1995}.

With a metric in hand, our task is formalized as an optimization problem, Problem  \ref{main_problem}.
Specifically, suppose we are given a target number of buckets $r$. Then the goal is to partition the set  $\{1,\ldots, N\}$ into $r$ buckets $S_1,\ldots,S_r$ (each containing a set of consecutive places), and to choose a set of payouts $\Pi_1, \dots, \Pi_r$ for each bucket, forming a final solution $(S_1, \dots, S_r, \Pi_1, \dots, \Pi_r)$.  

\vspace{-.5em}
\begin{figure*}[h]
\begin{mymathbox}
\begin{problem}[Payout Structure Optimization]
\label{main_problem}
For a given set of ideal payouts $\{\pi_1, \ldots, \pi_N\}$ for $N$ contest winners, a total prize pool of $B$, and minimum payout $E$, find $(S_1, \dots, S_r, \Pi_1, \dots, \Pi_r)$ to optimize:
\vspace{-.75em}
\begin{eqnarray*}
\min \sum_{j=1}^r \sum_{i\in S_j}(\pi_i-\Pi_{j})^2&\text{subject to:}&\\
E \le \Pi_r<\Pi_{r-1}< \dots< \Pi_1,& & \hspace{-0mm} \text{(Monotonicity \& Min. Payout Requirements)}\\
\sum_{j=1}^r\Pi_j|S_j|=B,&& \text{(Prize Pool Requirement)}\\
\Pi_j \mbox{ is a ``nice number'' },& j \in [r]\footnote{For ease of notation we use $[T]$ to denote the set of integers $\{1,2,\ldots,T\}$.}&\text{(Nice Number Requirement)}\\
\sum_{j=1}^r |S_j|=N,&& \text{(Ensure Exactly $N$ Winners)}\\
0\leq |S_1| \leq |S_{2}| < \dots \leq |S_r|,&&\text{(Monotonic Bucket Sizes}\footnote{Setting $S_1 = \emptyset$,  $S_2 = \emptyset$, etc. chooses a payout structure with fewer buckets than the maximum allowed.}\text{)}
\end{eqnarray*}
$\textit{where } S_j = \left\{\sum_{i < j} |S_i|+1, \sum_{i < j} |S_i|+2, \ldots, \sum_{i \leq j} |S_i|\right\} \text{ for } j \in [r]$.
\end{problem}
\end{mymathbox}
\vspace{-1.25em}
\end{figure*}

One advantage of our approach is that Problem \ref{main_problem} is agnostic to the initial curve $(\pi_1, \dots, \pi_N)$. Our algorithms could just as easily be applied to an ideal payout curve generated, for example, using the ``constant second derivative'' methodology of the World Series of Poker \cite{poker2}.

\smallskip
\noindent  {\textbf{Problem Feasibility.}
Note that, for a given set of constraints, Problem \ref{main_problem} \emph{could} be infeasible as formulated: it may be that no assignment to $(S_1, \dots, S_r, \Pi_1, \dots, \Pi_r)$ satisfies the Nice Number and Bucket Size Monotonicity constraints while giving payouts that sum to $B$. 
While the problem is feasible for virtually all fantasy sports tournaments tested (see Section \ref{sec:experiments}), we note that it is easy to add more flexible constraints that always yield a feasible solution. 
For example, we can soften the requirement on $N$ by adding a fixed objective function penalty for extra or fewer winners.

With this change, Problem 3.1 is feasible whenever B is a nice-number: we can award a prize of B to first place and prizes of zero to all other players. Experimentally, there are typically many feasible solutions, the best of which are vastly better than the trivial ``winner-take-all'' solution.

\smallskip
\noindent  {\textbf{Exact Solution via Dynamic Programming.}
While Problem \ref{main_problem} is complex, when feasible it can be solved exactly in pseudo-polynomial time via multi-dimensional dynamic programming. 
The runtime of the dynamic program depends on the number of potential prize assignments, which includes all of the nice numbers between $E$ and $B$. Since many reasonable definitions (including our Definition \ref{def:nice_numbers}) choose nice numbers to spread out exponentially as they increase, 
we assume that this value is bounded by $O(\log B)$.

\vspace{-.25em}
\begin{theorem}[Dynamic Programming Solution]
\label{thm:dynamic}
Assuming that there are $O(\log B)$ nice numbers in the range $[E,B]$, then Problem \ref{main_problem} can be solved in pseudo-polynomial time $O(rN^3B\log^2B)$.
\end{theorem}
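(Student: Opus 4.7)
The plan is to build a multi-dimensional dynamic program that fills the buckets $S_1, S_2, \ldots, S_r$ one at a time, from the highest-paying bucket down to the lowest. I would define the state as a tuple $(j, p, s, \Pi, b)$, where $j \in \{1,\ldots,r+1\}$ indexes the next bucket to fill; $p \in \{1,\ldots,N+1\}$ is the starting rank of that bucket (equivalently, one plus the number of winners placed so far); $s \in \{0,\ldots,N\}$ is the size of the \emph{previous} bucket, retained so we can enforce $|S_{j-1}| \leq |S_j|$; $\Pi$ is the payout of the previous bucket, drawn from the $O(\log B)$ nice numbers in $[E,B]$ and retained to enforce $\Pi_j < \Pi_{j-1}$; and $b \in \{0,\ldots,B\}$ is the budget remaining to distribute. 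The DP value at each state is the minimum accumulated squared-error cost among partial solutions consistent with the state.

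Transitions from $(j, p, s, \Pi, b)$ enumerate every candidate new bucket size $s' \in \{s, s+1, \ldots, N-p+1\}$ and every candidate nice-number payout $\Pi'$ with $E \leq \Pi' < \Pi$ and $s'\Pi' \leq b$, moving to $(j+1, p+s', s', \Pi', b - s'\Pi')$ while adding $\sum_{i=p}^{p+s'-1}(\pi_i - \Pi')^2$ to the running cost. The initial state is $(1, 1, 0, \Pi_{\max}, B)$ where $\Pi_{\max}$ is a sentinel exceeding every in-range nice number; the answer is the minimum DP value over accepting terminals $(r+1, N+1, \cdot, \cdot, 0)$. Empty leading buckets are handled naturally because $s=0$ permits $s'=0$. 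To make each transition cost $O(1)$, I would precompute prefix sums of $\pi_i$ and $\pi_i^2$ so that the bucket cost can be evaluated via the expansion $\sum_i \pi_i^2 - 2\Pi'\sum_i \pi_i + s'\Pi'^2$.

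Counting states: $O(r)$ choices for $j$, $O(N)$ for $p$, $O(N)$ for $s$, $O(\log B)$ for $\Pi$, and $O(B)$ for $b$, yielding $O(rN^2 B \log B)$ states in total. Each state has at most $O(N)$ size choices and $O(\log B)$ nice-number choices for outgoing transitions, i.e.\ $O(N \log B)$ out-edges, each processed in $O(1)$ time thanks to the prefix sums. Multiplying gives the claimed total runtime of $O(rN^3 B \log^2 B)$. Infeasibility is handled by initializing all DP values to $+\infty$: the instance is feasible if and only if some accepting terminal is reached with finite cost.

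The main thing to argue carefully is the Markov property: that storing only the previous bucket's size and payout (rather than the full history) suffices to enforce both monotonicity constraints. This is essentially immediate since $|S_{j-1}| \leq |S_j|$ and $\Pi_j < \Pi_{j-1}$ are local conditions, and the Prize Pool Requirement is captured exactly because $s'\Pi'$ is always an integer so $b$ can be discretized over $\{0,\ldots,B\}$ without loss. Once this observation is in place, correctness follows by induction on $j$ and the rest is the routine runtime bookkeeping above.
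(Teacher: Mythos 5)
Your proposal is correct and takes essentially the same approach as the paper: a multi-dimensional DP whose state records the number of winners placed, the budget consumed, and the previous bucket's payout and size, giving $O(rN^2B\log B)$ states with $O(N\log B)$ transitions each and hence the claimed $O(rN^3B\log^2 B)$ bound. The only cosmetic differences are that you fill buckets from the top prize downward while the paper proceeds from the lowest-paying bucket upward, and that your prefix-sum trick for $O(1)$ transition cost makes explicit a detail the paper's accounting leaves implicit.
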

\vspace{-.25em}
A formal description of the dynamic program and short proof of its correctness are included in Appendix \ref{app:dyno_proof}.
Unfortunately, despite a reasonable theoretical runtime, the dynamic program requires $O(rN^2 B \log B)$ space, which quickly renders the solution infeasible in practice.

\vspace{-1em}
\section{Integer Linear Program}
\label{sec:offtheshelf}
\vspace{-.5em}

Alternatively, despite its complexity, we show that   it is possible to formulate Problem \ref{main_problem} as a standard integer linear program. Since integer programming is computationally hard in general, this does not immediately yield an efficient algorithm for the problem. However, it does allow for the application of off-the-shelf optimization packages to the payout structure problem. 

\vspace{-.75em}
\begin{figure*}[h]
\begin{mymathbox}
\begin{problem}[Payout Structure Integer Program]\label{integer_program}
For a given set of ideal payouts $\{\pi_1, \ldots, \pi_N\}$, a total prize pool of $B$, a given set of acceptable prize payouts $\{p_1 > p_2 > \ldots > p_m\}$, and an allowed budget of $r$ buckets solve:
\vspace{-.5em}
\begin{eqnarray*}
\min \sum_{i\in[N],j\in[r],k\in[m]} x_{i,j,k}\cdot (\pi_i - p_k)^2&\text{subject to:}&\\
\textbf{Problem constraints:}\\
\sum_{k\in[m]} (k+1/2)\cdot\tilde{x}_{j,k}  - k\cdot\tilde{x}_{j+1,k} \leq 0,&j\in[r-1], &\text{(Monotonicity Requirements)}\\
\sum_{i\in[N],j\in[r],k\in[m]} x_{i,j,k} \cdot p_k=B,&& \text{(Prize Pool Requirement)}\\
\sum_{i\in[N], k\in[m]} x_{i,j,k}  - x_{i,j+1,k} \leq 0,&j\in[r-1],&\text{(Monotonic Bucket Sizes)}\\
\textbf{Consistency constraints:}\\
\sum_{j\in[r],k\in[m]} x_{i,j,k} = 1,&i\in[N],&\text{(One Bucket Per Winner)}\\
\sum_{k\in[m]} \tilde{x}_{j,k} \leq 1,&j\in[r],&\text{(One Prize Per Bucket)}\\
\tilde{x}_{j,k}  - x_{i,j,k} \geq 0,&i\in[N], j\in[r], k\in[m], &\text{(Prize Consistency)}
\end{eqnarray*}
\end{problem}
\end{mymathbox}
\vspace{-1.25em}
\end{figure*}

To keep the formulation general, assume that we are given a fixed set of acceptable prize payouts, $\{p_1 > p_2 > \ldots > p_m\}$. These payouts may be generated, for example, using Definition \ref{def:nice_numbers} for nice numbers. In our implementation, the highest acceptable prize is set to $p_1 = P_1$, where $P_1$ is the pre-specified winning prize. Additionally, to enforce the minimum payout requirement, we chose $p_m \geq E$. Our integer program, formalized as Problem \ref{integer_program}, involves the following variables:
\vspace{-.5em}
\begin{itemize}[itemsep=-.3em]
\item $N\times r \times m$ \emph{binary} ``contestant variables'' $x_{i,j,k}$. In our final solution, $x_{i,j,k} = 1$ if and only if contestant $i$ is placed in prize bucket $S_j$ and receives payout $p_k$.
\item $r\times m$ \emph{binary} ``auxiliary variables'' $\tilde{x}_{j,k}$.
$\tilde{x}_{j,k} = 1$ if and only if bucket $S_j$ is assigned payout $p_k$. Constraints ensure that $x_{i,j,k}$ only equals $1$ when $\tilde{x}_{j,k} = 1$. If, for a particular $j$, $\tilde{x}_{j,k} = 0$ for all $k$ then  $S_j$ is not assigned a payout, meaning that the bucket is not used. 
\end{itemize}
\vspace{-.5em}

It is easy to extract a payout structure from any solution to the integer program. Showing that the payouts satisfy our Problem \ref{main_problem} constraints is a bit more involved. A proof is in Appendix \ref{app:ip_proof}. 
%

\vspace{-1em}
\section{Heuristic Algorithm}
\label{sec:heuristic}
\vspace{-.5em}
Next we describe a heuristic algorithm for Problem \ref{main_problem} that is used in production at Yahoo. The algorithm is less rigorous than our integer program and can potentially generate payout structures that violate constraints. However, it scales much better than the IP and experimentally produces stellar payouts. The heuristic  proceeds in four stages.

\smallskip
\noindent  \textbf{Stage 1: Initialize Bucket Sizes.}
First the algorithm chooses tentative bucket sizes $|S_1|\le \ldots\le |S_r|$. We set $|S_1|=|S_2|=|S_3|=|S_4|=1$. The choice to use 4 ``singleton buckets'' by default is flexible: the algorithm can easily accommodate more. If $N-\sum_{i=1}^4|S_i|=1$, we define $|S_5|=1$ and stop. Otherwise we set $|S_t|=\lceil \beta\cdot |S_{t-1}|\rceil $ where $\beta\ge 1$ is a parameter of the heuristic.  
The algorithm stops when $\lceil\beta^2 |S_t| \rceil + \lceil \beta |S_t| \rceil +\sum_{i=1}^t|S_i|>N$ and $\lceil \beta |S_t| \rceil+\sum_{i=1}^t|S_i|\le N$. We define 
$$|S_{t+1}|=\left \lfloor N-\sum_{i=1}^t|S_i|/2\right\rfloor,\ 
|S_{t+2}|=\left \lceil \frac{N-\sum_{i=1}^t |S_i|}{2}\right\rceil.$$
An initial value for $\beta$ can be determined by solving $\beta + \beta^2 + \ldots +  \beta^{r-4} = N - 4$ via binary search. If the heuristic produces more than $r$ buckets using the initial $\beta$, we increase $\beta$ and iterate. 

\smallskip
\noindent  \textbf{Stage 2: Initialize Prizes.}
Next, we begin by rounding the first tentative prize, $\pi_1$, down to the nearest nice number. The difference between $\pi_1$ and the rounded number is called the {\it leftover}, which we denote by $L$.
Iteratively, for each bucket $S_2, \ldots, S_t$ we sum all of the tentative prizes in the bucket with the leftover from the previous buckets. Let $R_t$ be this number and define $\Pi_t$ to equal $R_t/|S_t|$ rounded down to the nearest nice number. If $\Pi_t$ is greater than or equal to the prize in bucket $t-1$, $\Pi_{t-1}$, we simply merge all member of $S_t$ into $S_{t-1}$, assigning them prize $\Pi_{t-1}$.
At the end we may have some non-zero leftover $L$ remaining from the last bucket.

\smallskip
\noindent \textbf{Stage 3: Post-Process Non-monotonic Bucket Sizes.}
Although the initial buckets from Step 1 increase in size, potential bucket merging in Step 2 could lead to violations in the Monotonic Bucket Size constraint. 
So, whenever $|S_t|$ is larger than $|S_{t+1}|$, we shift users to bucket $t+1$ until the situation is corrected. As a result we decrease the prizes for some users and increase the leftover $L$. We repeat this process starting from $S_1$ and ending at our lowest prize level bucket.
 
\smallskip
\noindent  \textbf{Stage 4: Spend Leftover Funds.}
Finally, we modify payouts to spend any leftover $L$.  We first spend as much as possible on singleton buckets 2 through 4. We avoid modifying first prize because it is often given as a hard requirement -- e.g. we want pay exactly \$1 million to the winner. In order from $i=2$ to $4$ we adjust $\Pi_i$ to equal $\min\{\Pi_i+L,(\Pi_{i-1}+\Pi_i)/2\}$, rounded down to a nice number. This spends as much of $L$ as possible, while avoiding too large an increase in each prize. 

If $L$ remains non-zero, we first try to adjust only the final (largest) bucket, $S_k$.
If $L\ge |S_k|$ then we set $\Pi_k=\Pi_k+1$ and $L=L-|S_k|$, i.e. we increase the prize for every user in $S_k$ by $1$. Note that this could lead to nice number violations, which are not corrected. We repeat this process (possibly merging buckets) until $L<|S_k|$. If at this point $L$ is divisible by $\Pi_k$ we increase $|S_k|$ by $L/\Pi_k$ (thereby increasing the number of users winning nonzero prizes beyond $N$). 

If $L$ is not divisible by $\Pi_k$, we rollback our changes to the last bucket and attempt to spend $L$ on the last \emph{two} buckets. Compute the amount of money available, which is the sum of all prizes in these buckets plus $L$. Fix the last bucket prize to be the minimal possible amount, $E$. Enumerate over possible sizes and integer prize amounts  for the penultimate bucket, again ignoring nice number constraints. If the last bucket can be made to have integer size (with payout $E$), store the potential solution and evaluate a ``constraint cost'' to penalize any constraint violations. The constraint cost charges $100$ for each unit of difference if the number of winners is less than $N$, $1$ for each unit of difference if the number of winners is larger than $N$, and $10$ for each unit of violation in bucket size monotonicity. From the solutions generated return the one with minimal constraint cost.

\vspace{-1em}
\section{Experiments}
\vspace{-.5em}
\label{sec:experiments}

We conclude with experiments that confirm the effectiveness of both the integer program (Section \ref{sec:offtheshelf}) and our heuristic (Section \ref{sec:heuristic}). Both algorithms were implemented in Java and tested on a commodity laptop with a 2.6 GHz Intel Core i7 processor and 16 GB of 1600 MHz DDR3 memory.
For the integer program, we employ the off-the-shelf, open source GNU Linear Programming Kit (GLPK), accessed through the SCPSolver front-end \cite{GLPK,scpsolver}. The package uses a branch-and-cut algorithm for IPs with a simplex method for underlying linear program relaxations.

We construct experimental payout structures for a variety of daily fantasy tournaments from Yahoo, FanDuel, and DraftKings and test on non-fantasy tournaments as well. 
For non-Yahoo contests, $P_1$ is set to the published winning prize or, when necessary, to a nearby nice number. The maximum number of buckets $r$ is set to match the number of buckets used in the published payout structure.
For fantasy sports, $E$ is set to the nearest nice number above $1.5$ times the entry fee. For all other contests  (which often lack entry fees or have a complex qualification structure) $E$ is set to the nearest nice number above the published minimum prize.

\smallskip
\noindent  \textbf{Quantitative Comparison.}
Our results can be evaluated by computing the Euclidean distance between our ideal pay out curve, $\{\pi_1, \ldots, \pi_n\}$, and the bucketed curve $\{\Pi_1, \ldots \Pi_m\}$. In heuristic solutions, if $m$ does not equal $n$, we extend the curves with zeros to compute the distance (which penalizes extra or missing winners). Our experimental data is included in Table \ref{tab:main_table}. Entries of ``--'' indicate that the integer program did not run to completion, possibly because no solution to Problem \ref{integer_program} exists. The cost presented is the sum of squared distances from the bucketed payouts to ideal power law payouts, as defined in Problem \ref{main_problem}. Note that we do not provide a cost for the \emph{source} payouts since, besides the Yahoo contests, these structures were not designed to fit our proposed ideal payout curve and thus cannot be expected to achieve small objective function values.
\vspace{-1.25em}
\begin{table*}[h]
\caption{Accuracy and Runtime (in milliseconds) for Integer Program (IP) vs. Heuristic (Heur.)\label{tab:main_table}}{%
\tiny
\centering 
\vspace{-.75em}
\begin{tabular}{||c|c|c|c|c|c||c|c||c|c|c||} 
\hhline{|t:======:t:==:t:===:t|}
Source&\specialcell{Prize Pool} & \specialcell{Top Prize} & \specialcell{Min. Prize} &\specialcell{\# of\\ Winners} & \specialcell{\# of\\ Buckets}& \specialcell{IP \\ Cost} & \specialcell{IP Time \\(ms)} & \specialcell{Heur. \\Cost} & \specialcell{Heur. \\Time (ms)} & \specialcell{Heur. \\ Extra \\ Winners} \\
\hhline{||-|-|-|-|-|-||-|-||---||}
Yahoo & 90 & 25 & 2& 30 & 7& .89 &7.6k &2.35 & 1 & 0 \\ 
Yahoo & 180 & 55 & 3& 30 & 10& 2.82 &725k &3.44 & 1 & 0\\ 
DraftKings & 500 & 100 & 8& 20 & 10& 6.15 &2.1k &9.21 & 1 &0 \\ 
Yahoo & 2250 & 650 & 150& 7 & 7& 32.4 &4.0k &187.4 & 1 &0 \\
Yahoo & 3000 & 300 & 2& 850 & 25& -- &-- &86.9 & 7 & 2 \\ 
FanDuel & 4000 & 900 & 50& 40 & 12& 20.7 &3716k &58.2 & 2 &1 \\ 
FanDuel & 4000 & 800 & 75& 16 & 7& 46.6 &2.9k &230.1 & 1 &4 \\ 
DraftKings & 5000 & 1250 & 150& 11 & 8& 52.5 &6.8k &123.5 & 1 &0 \\ 
Yahoo & 10000 & 1000 & 7& 550 & 25& -- &-- &97.3 & 8 &1 \\ 
DraftKings & 10000 & 1500 & 75& 42 & 12& 61.3 &1291k &173.7 & 2 &0 \\ 
FanDuel & 18000 & 4000 & 150& 38 & 10& 161.8 &131k &347.0 & 5 &0 \\ 
FanDuel & 100000 & 10000 & 2& 23000 & 25& -- &-- &3.1k & 152 &34 \\ 
Bassmaster & 190700 & 50000 & 2000& 40 & 15& -- &-- & 3.5k\textsuperscript{*} & 3 &0 \\ 
Bassmaster & 190000\textsuperscript{$\dagger$}\ & 50000 & 2000& 40 & 15& 2.5k &3462k & 2.8k & 1 &0 \\ 
FLW Fishing & 751588 & 100000 & 9000& 60 & 25& -- &-- & 6.0k\textsuperscript{*} & 3 &0 \\ 
FLW Fishing & 751500\textsuperscript{$\dagger$}\ & 100000 & 9000& 60 & 25& -- &-- & 6.0k & 2 &0 \\ 
FanDuel & 1000000 & 100000 & 15& 16000 & 25& -- &-- &5.3k & 203 &7 \\ 
DraftKings & 1000000 & 100000 & 5& 85000 & 40& -- &-- & 25.9k & 1.2k &0 \\ 
Bassmaster & 1031500 & 30000 & 10000& 55 & 25& -- &-- & 13.5k\textsuperscript{*} & 14 &0 \\ 
FanDuel & 5000000 & 1000000 & 40& 46000 & 30& -- &-- &44.3k & 1.0k &0 \\ 
PGA Golf & 9715981 & 1800000 & 20000& 69 & 69& -- &-- & 254.5k\textsuperscript{*} & 24 &0 \\ 
PGA Golf & 1000000\textsuperscript{$\dagger$}\ & 1800000 & 20000& 75 & 75& -- &-- & 215.9k\textsuperscript{*} & 23 &9 \\ 
DraftKings & 10000000 & 2000000 & 25& 125000 & 40& -- &-- & 78.7k & 1.7k &0 \\ 
Poker Stars& 10393400 & 1750000 & 15000& 160 & 25& -- &-- & 133.0k\textsuperscript{*} & 27 &0 \\ 
WSOP & 60348000 & 8000000 & 15000& 1000 & 30& -- &-- & 462.3k\textsuperscript{*} & 17 &0 \\ 
\hhline{|b:======:b:==:b:===:b|}
\multicolumn{11}{l}{\TstrutBig\specialcellleft{\textsuperscript{$\dagger$}\footnotesize{Contest is identical to the contest in the preceding row, but with the prize pool rounded to a nearby number in} \\ \footnotesize{an effort to force a solution involving only nice numbers to exist.}}}\\
\multicolumn{11}{l}{\Tstrut\textsuperscript{*}\footnotesize{Heuristic produced solution with nice number constraint violation for a single bucket.}}
\end{tabular}}
\vspace{-1em}
\end{table*}

As expected, when it succeeds in finding a solution, the integer program consistently outperforms the heuristic. However, the difference is rarely large, with heuristic cost typically below 5x that of the IP. Furthermore, the heuristic runs in less than 1.5 seconds for even the most challenging contests. Its ability to generate payouts when no solution to Problem 4.1 exists is also a substantial advantage: it always returns a solution, but potentially with a minor constraint violation.

\smallskip
\noindent  \textbf{Constraint Violations.}
In our experiments, any such heuristic violation was a Nice Number violation, with no Bucket Size Monotonicity violations observed. 6 of the 7 Nice Number violations are unavoidable given the input minimum prize and prize pool. For example, the Fishing League Worldwide (FLW) fishing tournament has a prize pool of \$751,588 and a minimum prize of \$9,000. Since all nice numbers greater than or equal to \$9000 are multiples of \$5000, it is impossible to construct a fully satisfiable set of payouts summing to \$751,588. In all cases besides one (the PGA tournament with prize pool \$9,715,981) simply rounding the prize pool to a nearby number produced an input for which the heuristic output a solution with no constraint violations. However, in settings where the prize pool \emph{must} be a non-nice number (i.e., cannot be rounded up or down, for whatever reason), our heuristic's flexibility is an advantage over the more rigid integer program.

\vspace{-.5em}
\begin{figure*}[h]
\centering
\begin{subfigure}{0.33\textwidth}
\centering
\includegraphics[width=.95\textwidth]{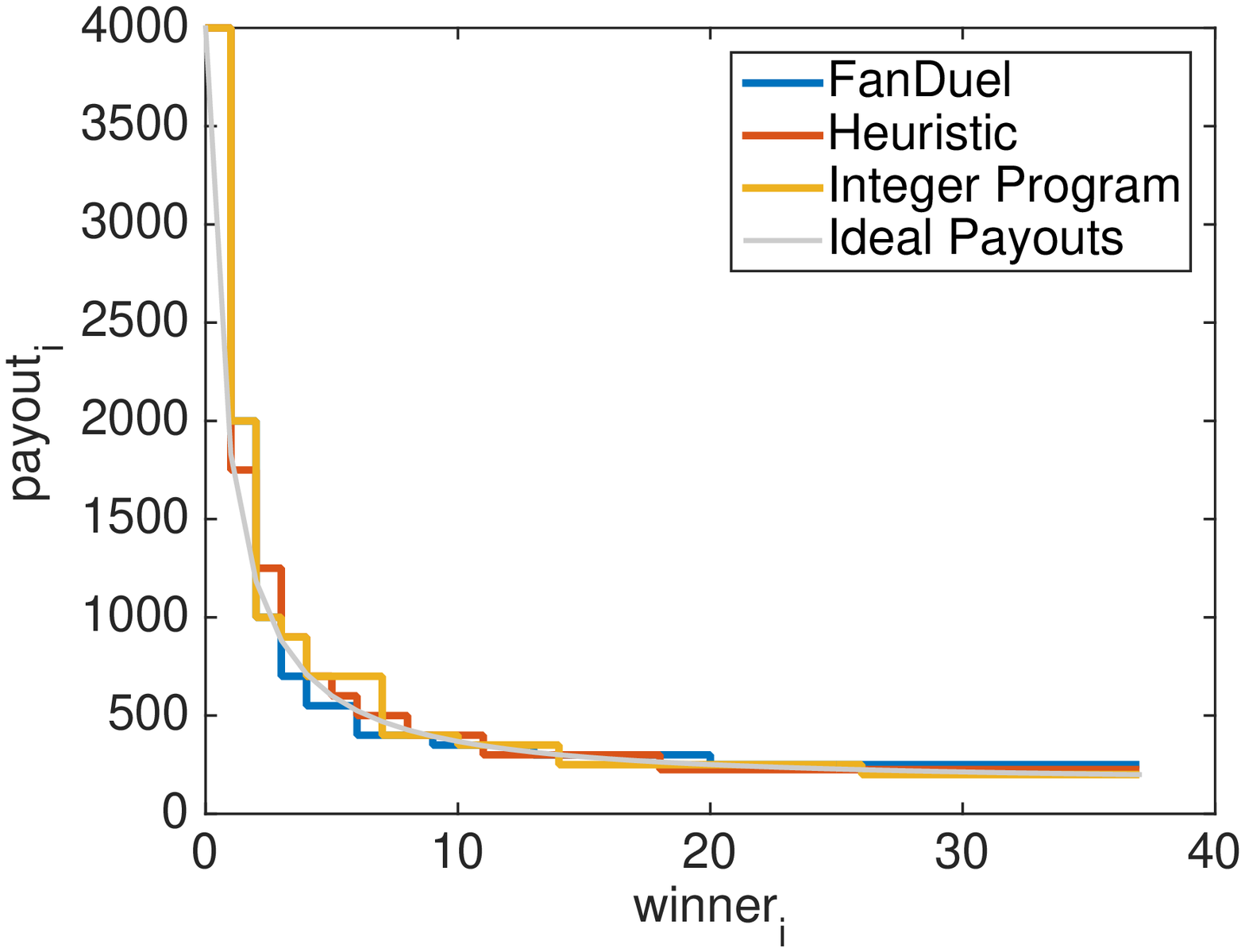} 

\caption{FanDuel, Baseball}
\end{subfigure}%
\begin{subfigure}{0.33\textwidth}
\centering
\includegraphics[width=.92\textwidth]{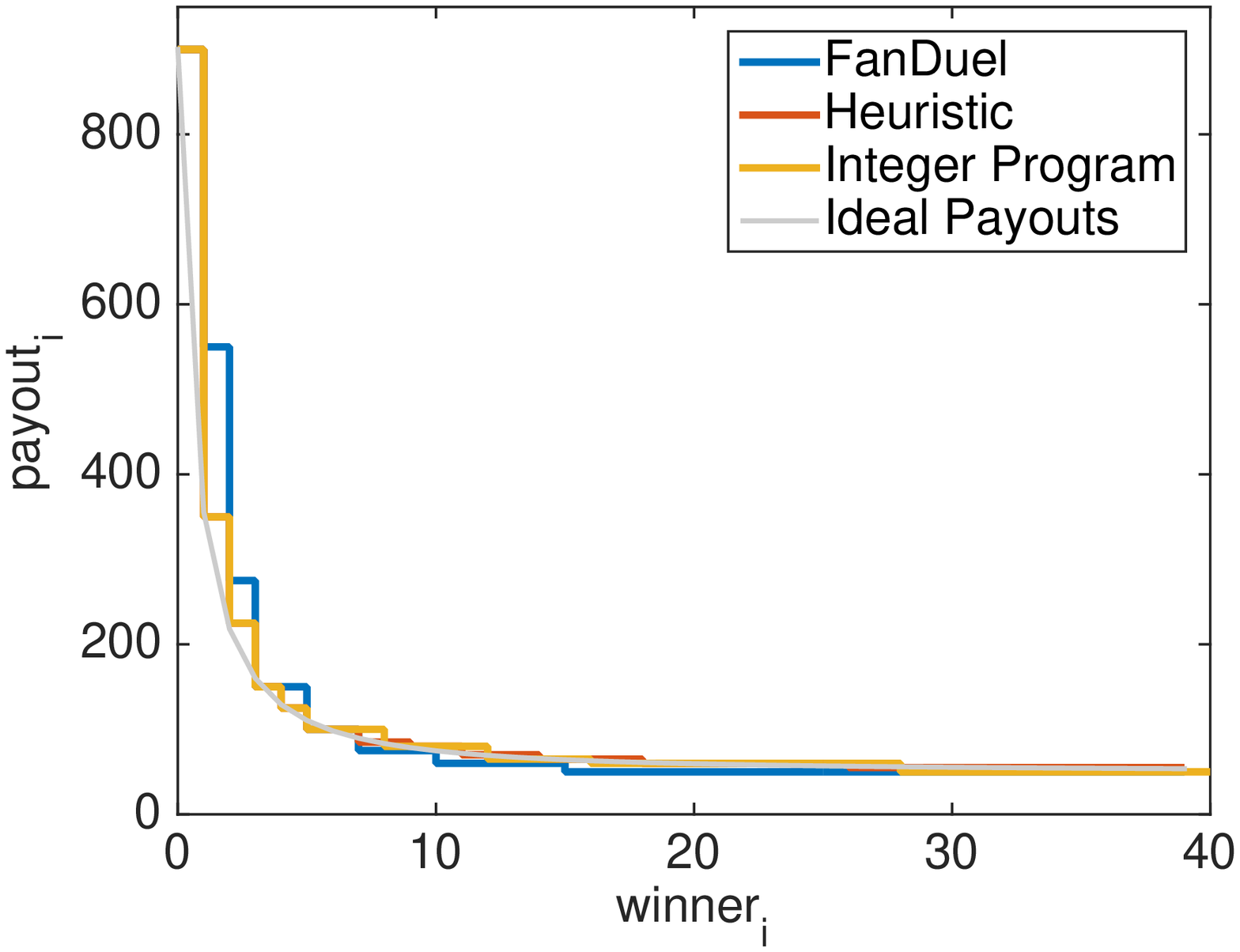}

\caption{FanDuel, Baseball}
\label{subfig:fduel2}
\end{subfigure}
\begin{subfigure}{0.33\textwidth}
\centering
\includegraphics[width=.95\textwidth]{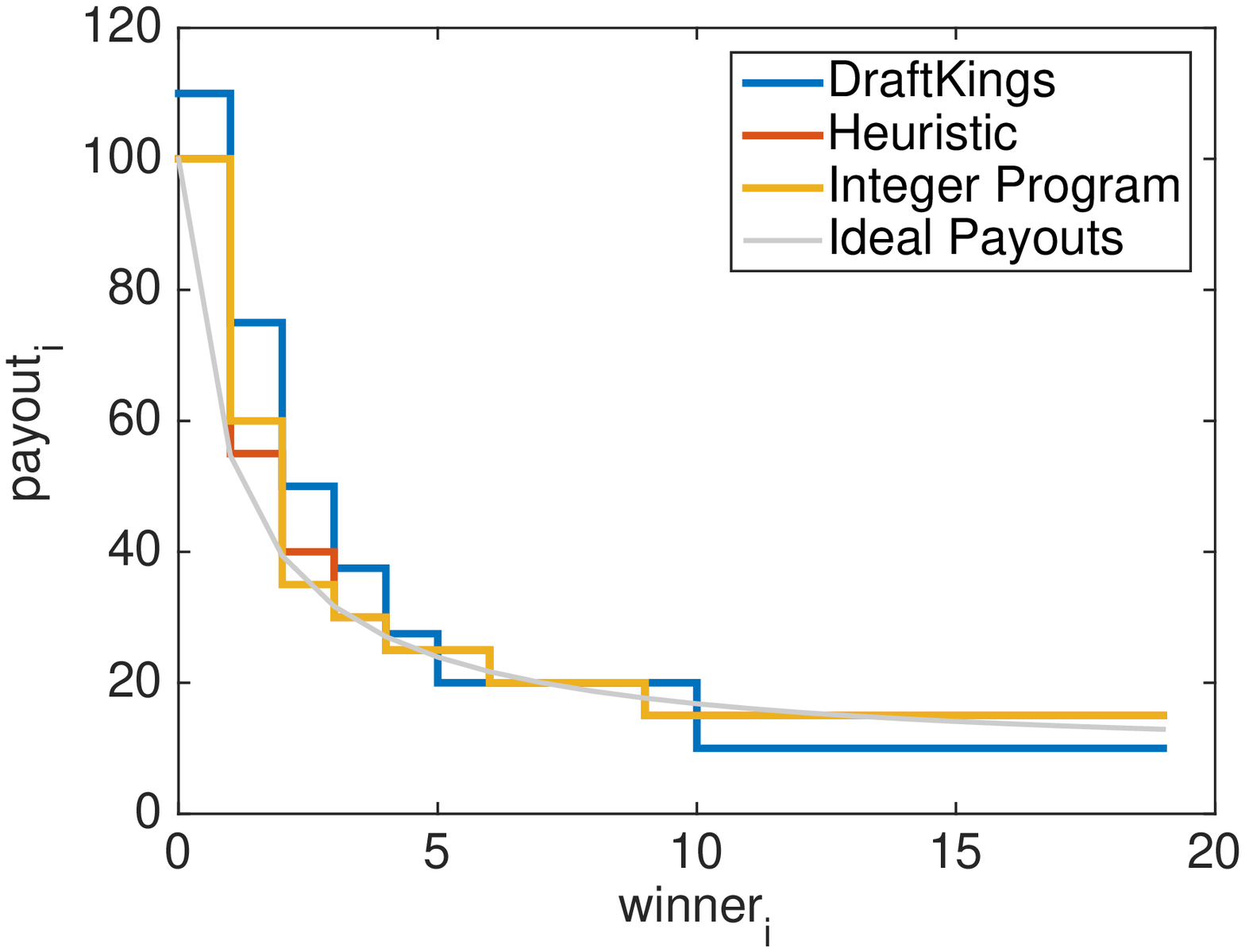}

\caption{DraftKings, Baseball}
\label{subfig:dkings1}
\end{subfigure}
\vspace{-.5em}
\caption{Payout structures for small daily fantasy contests}
\label{fig:small_contests}
\vspace{-1.25em}
\end{figure*}

\medskip
\noindent \textbf{Qualitative Comparison.}
Beyond quantitative measures to compare algorithms for Problem \ref{main_problem}, evaluating our two stage framework as a whole requires a more qualitative approach. Accordingly, we include plots comparing our generated payout structures to existing published structures. 

For small fantasy sports contests (Figure \ref{fig:small_contests}) both of our algorithms match payouts from FanDuel and DraftKings extremely well, often overlapping for large sections of the discretized payout curve. To our knowledge, FanDuel and DraftKings have not publicly discussed their methods for computing payouts; their methods may involve considerable manual intervention, so matching these structures algorithmically is encouraging.
In some cases, notably in Figures \ref{subfig:dkings1} and \ref{subfig:fduel2}, our payout curve is ``flatter'', meaning there is a smaller separation between the top prizes and lower prizes. Many poker and fantasy sports players prefer flatter structures due to reduced payout variance \cite{roto1,roto2}. 

\vspace{-1em}
\begin{figure*}[h]
\centering
\begin{subfigure}{0.48\textwidth}
\centering
\includegraphics[width=.75\textwidth]{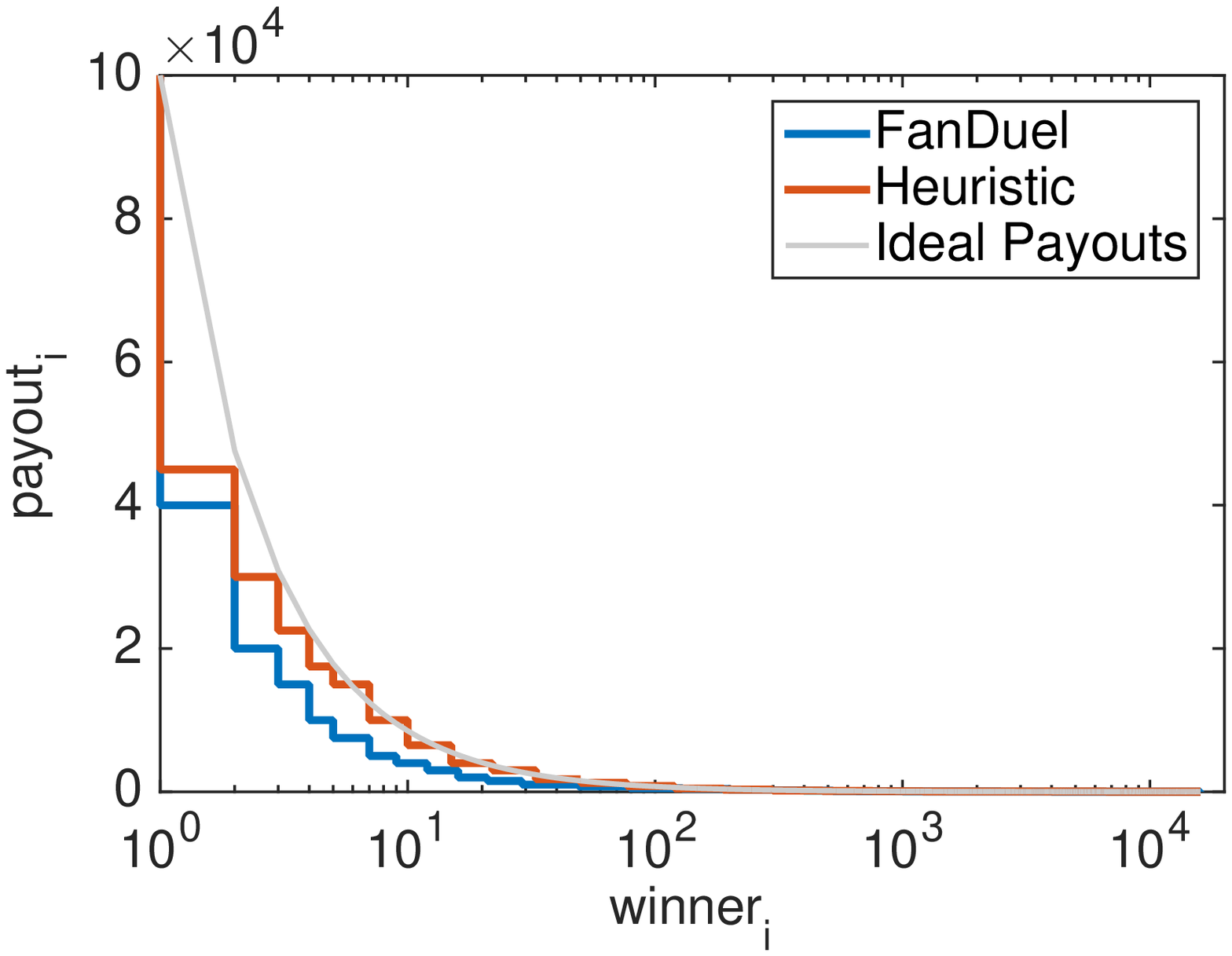} 
\vspace{-.5em}
\caption{FanDuel, Football}
\label{fig:large_contests_fd}
\end{subfigure}%
\hfill
\begin{subfigure}{0.48\textwidth}
\centering
\includegraphics[width=.75\textwidth]{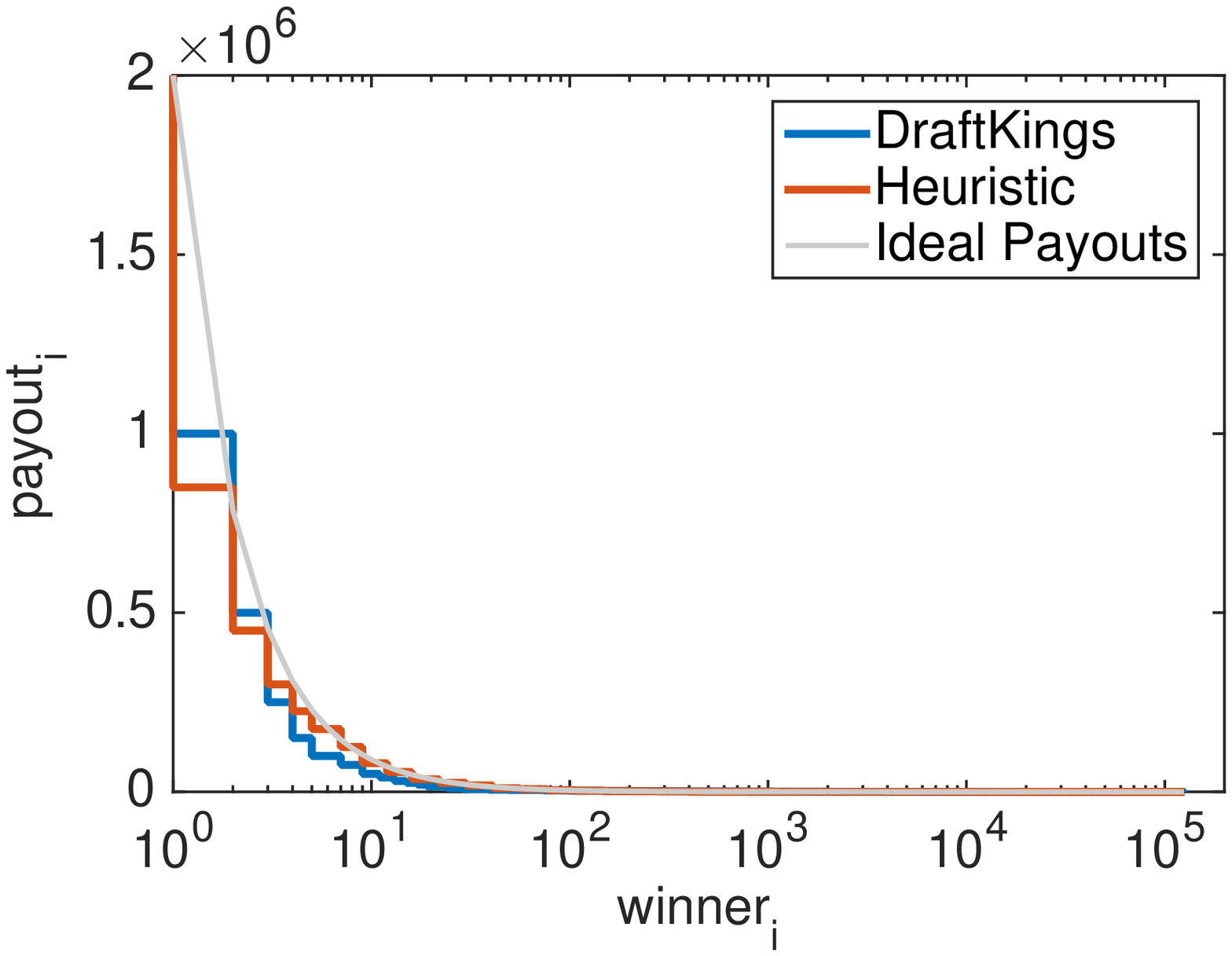}
\vspace{-.5em}
\caption{DraftKings, Football}
\label{fig:large_contests_dk}
\end{subfigure}
\vspace{-.75em}
\caption{Payout structures for large daily fantasy contests}
\label{fig:large_contests}
\vspace{-1em}
\end{figure*}
We compare larger fantasy contest payouts in Figure \ref{fig:large_contests}, plotting the $x$ axis on a log scale due to a huge number of winning places (16,000 and 125,000 for  \ref{fig:large_contests_fd} and  \ref{fig:large_contests_dk} respectively).
Again our results are very similar to those of FanDuel and DraftKings.

We also show that our algorithms can easily construct high quality payout structures for non-fantasy tournaments, avoiding the difficulties discussed in Section \ref{sec:prior_work}. Returning to the Bassmaster example from Figure \ref{fig:bmaster_failure}, we show more satisfying structures generated by our IP and heuristic algorithm in Figure \ref{fig:bmaster_fixed}. For the IP, we rounded the prize pool to \$190,000 from \$190,700 to ensure a solution to Problem \ref{integer_program} exists. However, even for the original prize pool, our heuristic solution has just one non-nice number payout of \$2,700 and no other constraint violations. Both solutions avoid the issue of non-monotonic bucket sizes exhibited by the published Bassmaster payout structure.

\vspace{-1em}
\begin{figure}[h]
\centering
\includegraphics[width=.6\textwidth]{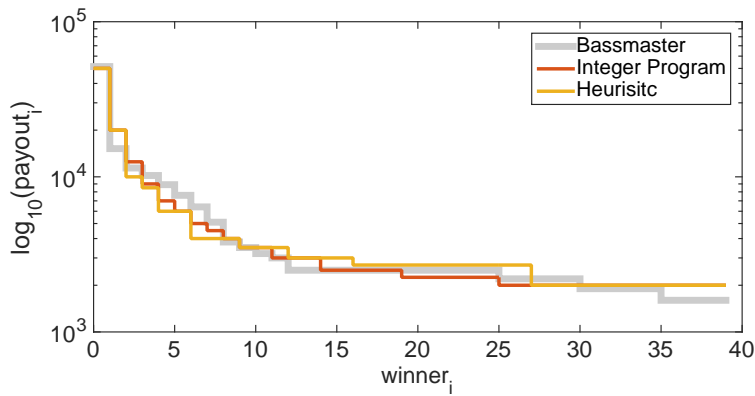}
\vspace{-.75em}
\caption{Alternative Bassmaster Open 2015 payouts}
\label{fig:bmaster_fixed}
\vspace{-1.25em}
\end{figure}


We conclude with an example from the popular World Series of Poker (WSOP). The 2015 Main Event payed out \$60,348,000 to 1000 winners in  31 buckets with far from ``nice number'' prizes \cite{wsop_events}.
However, several months prior to the tournament, organizers published a very different \emph{tentative} payout structure \cite{pokerPayoutsFailure}, that appears to have attempted to satisfy many of the constraints of Problem \ref{main_problem}: it uses mostly nice number prizes and nearly satisfies our Bucket Size Monotonicity constraint. 

This tentative structure (visualized in Figure \ref{fig:wsop_tent}) suggests that WSOP organizers originally sought a better solution than the payout structure actually used. Perhaps the effort was abandoned due to time constraints: the WSOP prize pool is not finalized until just before the event.


 We show in Table \ref{tab:wsop} that our heuristic can rapidly (in 17 milliseconds) generate an aesthetically pleasing payout structure for the final prize pool with the initially planned top prize of \$8 million, and just one minor nice number violation (our final bucket pays \$20,150). Our output and the actual WSOP payout structure are also compared visually in Figure \ref{fig:wsop_tent} to the power curve used in the first stage of our algorithm. In keeping with the WSOP tradition of paying out separate prizes to places 1-9 (the players who make it to the famous ``November Nine'' final table) we run our heuristic with 9 guaranteed singleton buckets instead of 4 (see Section \ref{sec:heuristic} for details).

\vspace{.5em}
  \begin{minipage}{\textwidth}
  \begin{minipage}[t]{0.49\textwidth}
    \centering
    \includegraphics[width=.79\textwidth]{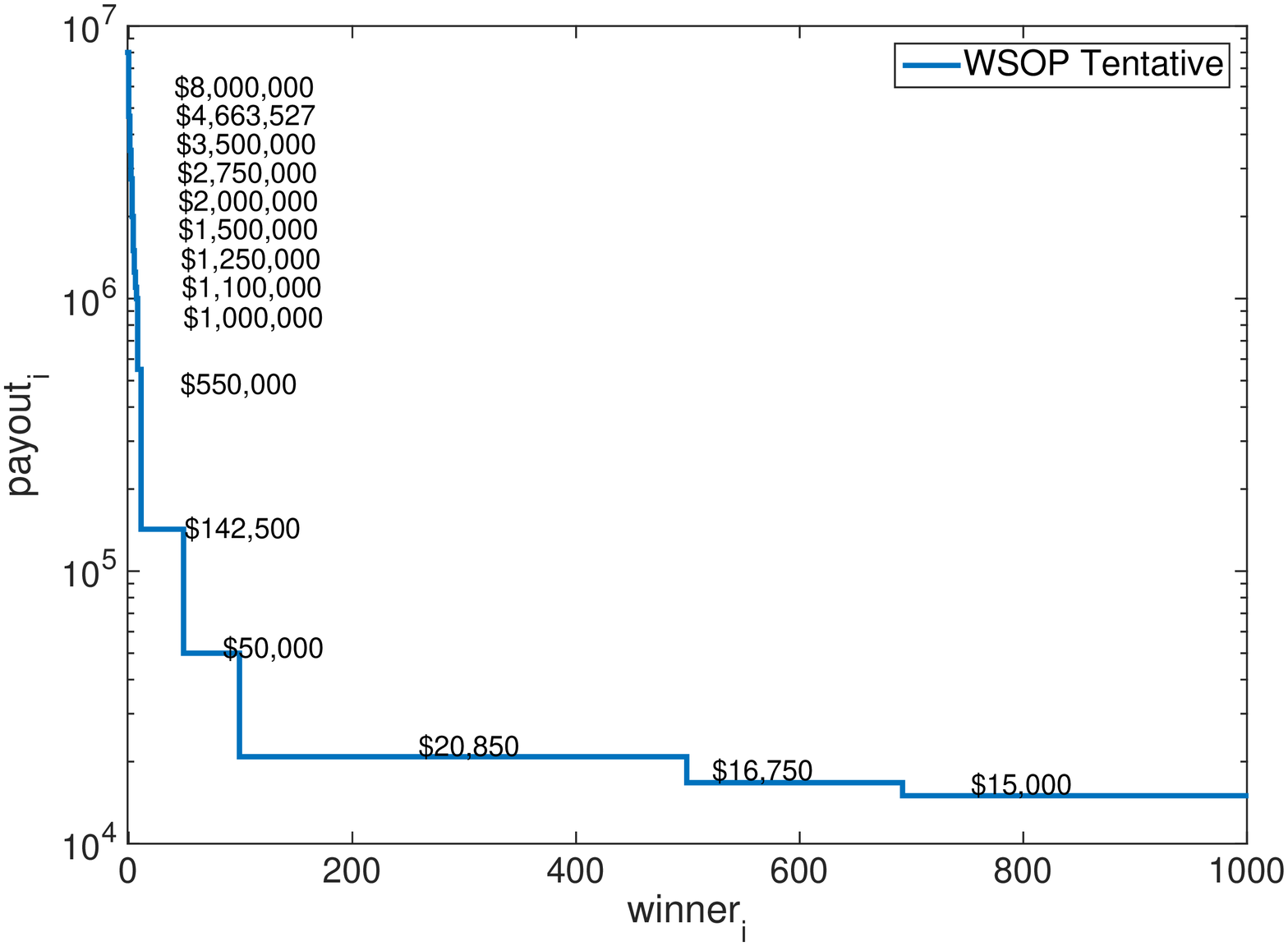} 
\vspace{-.5em}
    \includegraphics[width=.79\textwidth]{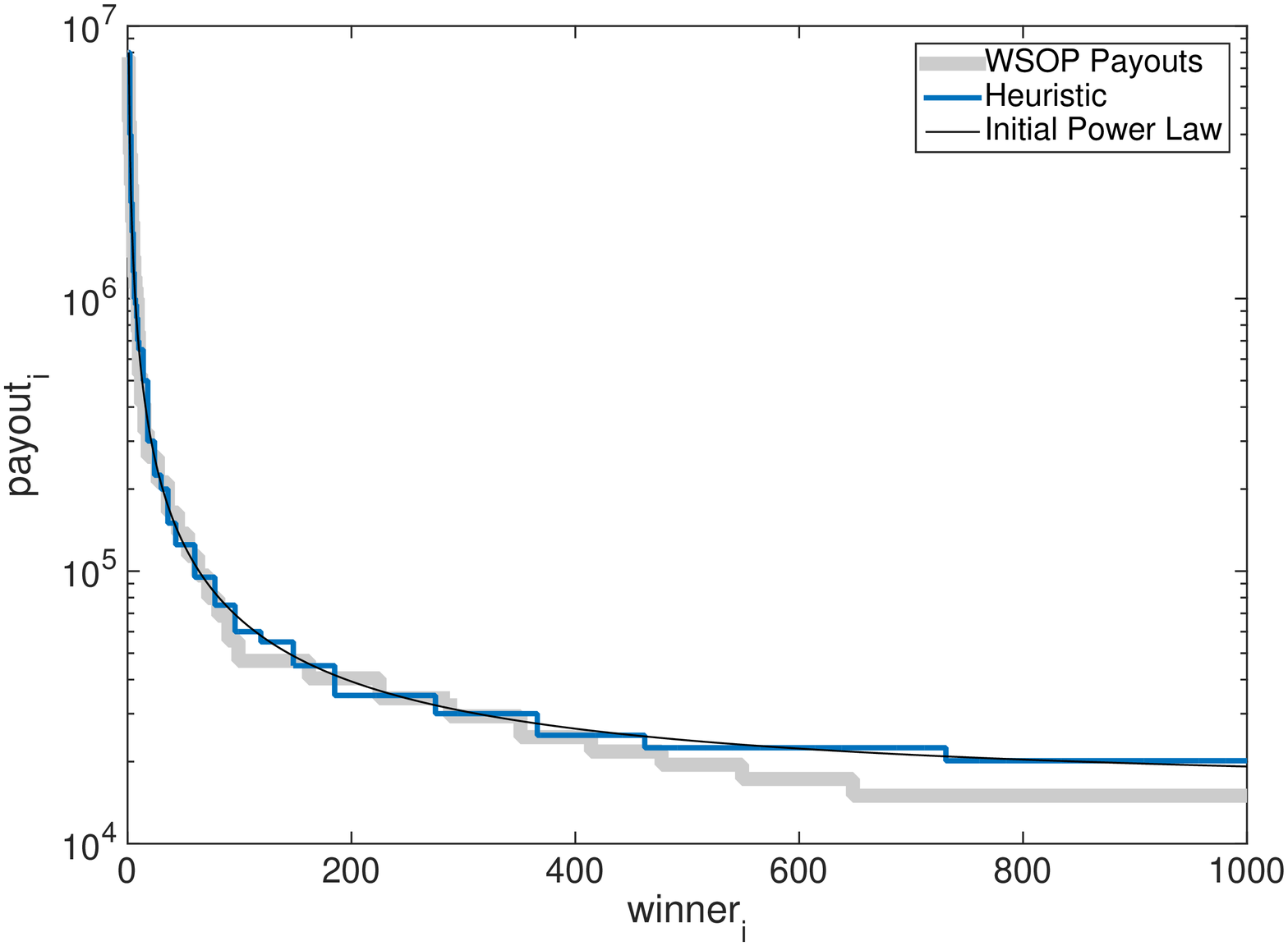}
\vspace{-.17em}
    \captionof{figure}{WSOP 2015 tentative prizes and our alternative ``nice'' payout structure.}
	\label{fig:wsop_tent}
  \end{minipage}
  \hfill
  \begin{minipage}[t]{0.49\textwidth}
    \centering
    \tiny
\begin{tabular}{||c|c||c|c||} 
\hhline{|t:==:t:==:t|}
\multicolumn{2}{||c||}{\specialcell{2015 WSOP \\Payouts}} & \multicolumn{2}{|c||}{\specialcell{Our Alternative \\Payouts}} \\
\hline
Place & Prize & Place & Prize \\
\hline
1 & \$7,680,021 & 1& \$8,000,000 \\
2 & \$4,469,171 & 2& \$4,000,000 \\
3 & \$3,397,103 & 3& \$2,250,000 \\
4 & \$2,614,558  & 4& \$1,750,000 \\
5 & \$1,910,971 & 5& \$1,250,000 \\
6 & \$1,426,072 & 6& \$1,000,000\\
7 & \$1,203,193 & 7 & \$950,000 \\
8 & \$1,097,009 & 8& \$850,000 \\
9 & \$1,001,020 & 9& \$700,000\\
10 & \$756,897 & \multirow{2}{*}{10 - 13}& \multirow{2}{*}{\$650,000} \\
11 - 12& \$526,778 & & \\
13 - 15& \$411,453 & \multirow{2}{*}{14 - 17}& \multirow{2}{*}{\$500,000} \\
16 - 18& \$325,034 & & \\
\multirow{2}{*}{19 - 27}& \multirow{2}{*}{\$262,574} & 18 - 23& \$300,000 \\
& & 24 - 29& \$225,000  \\
28 - 36& \$211,821 & 30 - 35& \$200,000  \\
36 - 45& \$164,086 &  36 - 42& \$150,000 \\
46 - 54& \$137,300 & 43 - 59& \$125,000\\
55 - 63& \$113,764  & \multirow{3}{*}{60 - 77}& \$95,000 \\
64 - 72& \$96,445& &  \\
73 - 81& \$79,668 & & \\
82 - 90& \$68,624 & \multirow{2}{*}{78 - 99}& \multirow{2}{*}{\$75,000} \\
91 - 99& \$55,649& & \\
\multirow{2}{*}{100 - 162}& \multirow{2}{*}{\$46,890} & 100 - 128& \$60,000 \\
 & & 128 - 164& \$55,000 \\
163 - 225& \$40,433 & 165 - 254& \$45,000 \\
226 - 288& \$34,157 & \multirow{2}{*}{255 - 345}& \multirow{2}{*}{\$35,000} \\
289 - 351& \$29,329 & &  \\
352 - 414& \$24,622 & \multirow{2}{*}{346 - 441}& \multirow{2}{*}{\$25,000} \\
415 - 477& \$21,786 & &\\
478 - 549&  \$19,500 &\multirow{2}{*}{442 - 710} & \multirow{2}{*}{\$22,500} \\
550 - 648& \$17,282 & & \\
649 - 1000& \$15,000 & 711 - 1000& \$20,150 \\
\hhline{|b:==:b:==:b|}
\end{tabular}
      \captionof{table}{Our alternative payouts vs. actual  WSOP payouts.\label{tab:wsop}}
    \end{minipage}
  \end{minipage}

\clearpage
\bibliography{ALENEX17_PayoutsPaper}{}
\bibliographystyle{plain}


\clearpage
\appendix
\section{Dynamic Programming Details}
\label{app:dyno_proof}
\begin{reptheorem}{thm:dynamic}[Dynamic Programming Solution]
Assuming that there are $O(\log B)$ nice numbers in the range $[E,B]$, then Problem \ref{main_problem} can be solved in pseudo-polynomial time $O(rN^3B\log^2B)$.
\end{reptheorem}
\begin{proof}
We can construct a dynamic program that chooses buckets and prizes in order from the lowest awarded places to the winner: $(S_r, \Pi_r), (S_{r-1}, \Pi_{r-1}),\ldots (S_{1}, \Pi_{1})$.
Suppose you have already selected $(S_r, \Pi_r), \ldots, (S_{i+1}, \Pi_{i+1})$. Since our objective function can be separated as
$\sum_{j=1}^r \sum_{i\in S_j}(\pi_i-\Pi_{j})^2 = \sum_{j=1}^i \sum_{i\in S_j}(\pi_i-\Pi_{j})^2 +  \sum_{j=i+1}^r \sum_{i\in S_j}(\pi_i-\Pi_{j})^2$, 
the task of choosing the optimal $(S_{i}, \Pi_{i}),\allowbreak \ldots, (S_{1}, \Pi_{1})$ then reduces to solving a smaller instance of Problem \ref{main_problem}, with:
\begin{enumerate}
\item $N$ = the number of winners remaining, i.e. those not placed in $(S_r, \Pi_r), \ldots, (S_{i+1}, \Pi_{i+1})$.
\item $B$ = the prize budget remaining, i.e. whatever was not consumed by $(S_r, \Pi_r), \ldots, (S_{i+1}, \Pi_{i+1})$.
\item $E$ = the smallest nice number greater than maximum prize given so far, $\Pi_{i+1}$.
\item An additional constraint on maximum bucket size, which must be less than $|S_{i+1}|$.
\end{enumerate}
Thus, for a given tuple of (winners consumed, budget consumed, $\Pi_{i+1}$, $|S_{i+1}|$), our choice for $(S_{i}, \Pi_{i}), \ldots, (S_{1}, \Pi_{1})$ can ignore any other aspects of $(S_r, \Pi_r), \ldots, (S_{i+1}, \Pi_{i+1})$. Accordingly, it suffices to store the optimal assignment to $(S_r, \Pi_r), \ldots,\allowbreak (S_{i+1}, \Pi_{i+1})$ for each such 4-tuple. There are at worst $N\times B\times O(\log B) \times N = O(N^2 B \log B)$ tuples. 

Now, given all assignments for $(S_r, \Pi_r), \ldots, (S_{i+1}, \Pi_{i+1})$ that are optimal for some tuple, one simple way to generate the set of optimal assignments for $(S_r, \Pi_r), \ldots, (S_{i}, \Pi_{i})$ is to try adding every possible $(S_{i}, \Pi_{i})$ (there are $\leq N\log B$ of them) to each optimal assignment. For each new assignment generated, we check if it has lower cost than the current best assignment to $(S_r, \Pi_r), \ldots, (S_{i}, \Pi_{i})$ for the tuple induced by the choice of $(S_{i}, \Pi_{i})$ (or just store it if no other assignment has induced the tuple before). In total this update procedure takes time $O(N\log B) \times O(N^2 B \log B)$ time. It must be repeated $r$ times, once for each bucket added. Our total runtime is thus $O(rN^3B\log^2 B)$.
\end{proof}
The dynamic program solution requires $O(r N^2 B \log B)$ space to store the optimal assignment for each tuple at the current iteration.

\section{Integer Programming Details}
\label{app:ip_proof}
\begin{theorem}{thm:integer}[Integer Programming Solution]
Any solution satisfying the constraints of integer programming Problem \ref{integer_program} gives a solution to Problem \ref{main_problem}.
\end{theorem}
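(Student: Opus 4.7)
The plan is to define an explicit extraction map from feasible solutions of the integer program to tuples $(S_1,\ldots,S_r,\Pi_1,\ldots,\Pi_r)$ and then check each Problem \ref{main_problem} constraint in turn. Given a feasible $(x,\tilde{x})$, I set $|S_j|:=\sum_{i,k} x_{i,j,k}$, which canonically determines $S_j$ as a consecutive range of places via the formula stated in Problem \ref{main_problem}. For each bucket $j$ admitting some (necessarily unique, by ``One Prize Per Bucket'') $k$ with $\tilde{x}_{j,k}=1$, I set $\Pi_j:=p_k$. Since every $\Pi_j$ lies in the predeclared list $\{p_1,\ldots,p_m\}$, the Nice Number Requirement holds automatically, and choosing $p_m\ge E$ enforces the Minimum Payout bound. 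The ``Prize Consistency'' constraint $\tilde{x}_{j,k}\ge x_{i,j,k}$ additionally guarantees that $\Pi_j$ is well defined whenever $S_j$ is non-empty.

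Most remaining constraints reduce almost directly to IP constraints. The Prize Pool equality $\sum_j \Pi_j|S_j|=B$ is a reindexing of $\sum_{i,j,k} x_{i,j,k}\,p_k=B$. The total-winners condition $\sum_j|S_j|=N$ follows by summing ``One Bucket Per Winner'' over $i\in[N]$. Bucket Size Monotonicity $|S_j|\le|S_{j+1}|$ is, after substituting the definition of $|S_j|$, exactly the $j$th IP ``Monotonic Bucket Sizes'' inequality.

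The main obstacle is Prize Monotonicity, because the IP encodes strict inequality through the unusual $(k+\tfrac{1}{2})$ coefficient, and one must verify this trick behaves correctly even when some buckets are unused. I would split into cases. When both $S_j$ and $S_{j+1}$ are non-empty with $\tilde{x}_{j,k^{\ast}}=\tilde{x}_{j+1,k'}=1$, the $j$th IP monotonicity constraint collapses to $k^{\ast}+\tfrac{1}{2}-k'\le 0$, forcing the integer $k'$ to strictly exceed $k^{\ast}$ and hence $\Pi_{j+1}=p_{k'}<p_{k^{\ast}}=\Pi_j$ by strict decrease of the $p_k$. When $S_{j+1}$ is empty but $S_j$ is used, the Bucket Size Monotonicity constraint forces $|S_j|\le|S_{j+1}|=0$, contradicting ``Prize Consistency'' applied to the contestant in $S_j$, so this case cannot arise. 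Consequently, any empty buckets must occupy an initial segment $S_1,\ldots,S_{r'}=\emptyset$, exactly as Problem \ref{main_problem}'s footnote permits, while the remaining used buckets satisfy $E\le\Pi_r<\Pi_{r-1}<\cdots<\Pi_{r'+1}$.

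Finally, although the theorem as stated only requires feasibility, it is worth noting that the IP and Problem \ref{main_problem} objectives coincide on the extracted tuple provided contestants are placed in rank order within their buckets. Since both $\pi_i$ and the extracted $\Pi_j$ are monotonically decreasing, a standard rearrangement inequality shows that the rank-ordered placement minimizes the IP objective over all reassignments that preserve the chosen bucket-size pattern, so an optimal IP solution extracts to an optimal feasible point of Problem \ref{main_problem}.
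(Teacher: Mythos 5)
Your proposal is correct and follows essentially the same route as the paper's proof: extract $(S_j,\Pi_j)$ from the variables, verify each constraint of Problem \ref{main_problem} against its IP counterpart, handle the $(k+\tfrac{1}{2})$ strict-monotonicity trick and the empty-bucket convention, and justify contiguity of buckets via an exchange argument on the quadratic objective. The only cosmetic difference is that you invoke the rearrangement inequality where the paper spells out the corresponding pairwise swap computation $-2(\pi_i-\pi_j)(p_{high}-p_{low})\le 0$ explicitly.
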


\begin{proof}
Given a solution to Problem \ref{integer_program}, a solution to Problem \ref{main_problem} is obtained by first checking each auxiliary variable. If $\tilde{x}_{j,k} = 1$, then $\Pi_j$ is set to $p_k$. The second consistency constraint ensures that for a given $j$, $\tilde{x}_{j,k}$ can equal 1 for at most one value of $k$. Thus, each $\Pi_j$ is assigned at most once. If no $\tilde{x}_{j,k}$ is equal to 1 for a given $j$, then $\Pi_j$ and accordingly $S_j$, remain unused in our solution.

The winners for bucket $S_j$ are simply all values of $i$ for which $x_{i,j,k}$ is equal to 1. Our first consistency constraint ensures that for a given $i$, $x_{i,j,k}$ is 1 exactly once, meaning that each winner is assigned to a single bucket. The third consistency constraint ensures that $x_{i,j,k} = 1$ only when $\tilde{x}_{j,k} = 1$, which ensures that each winner in bucket $S_j$ is assigned the correct payout, $\Pi_j = p_k$. 

Accordingly, any solution to Problem \ref{integer_program} produces a valid payout structure. We just need to prove that it conforms to the constraints of our original optimization problem.

 Recalling that, for a given $i$, $x_{i,j,k} = 1$ for exactly one pair $(j,k)$, it is straight forward to see that the objective function and prize pool constraint are equivalent those from Problem \ref{main_problem} as long. The bucket size monotonicity constraint is also simple to verify: $\sum_{i\in[N], k\in[m]} x_{i,j,k}$ is exactly equal to the number of winners assigned to bucket $j$, $|S_j|$. So our constraint correctly verifies that $|S_j| \leq |S_{j+1}|$ for all $j$.

For the monotonicity requirement, notice that $\sum_{k\in[m]} (k+\frac{1}{2})\cdot\tilde{x}_{j,k}$ simply equals $k + \frac{1}{2}$ whenever $\Pi_j = p_k$ in our solution. Similarly $\sum_{k\in[m]} k\cdot\tilde{x}_{j+1,k}$ equals $k$ when $\Pi_{j+1} = p_k$. Since lower values of $k$ correspond to higher prize payouts (recall that $p_1 > p_2 > \ldots p_k$), the first constraint in our integer program therefore ensures that $\Pi_j$ is strictly greater than $\Pi_{j+1}$. There are a number of ways to enforce the ``strictly'' requirement besides adding the $\frac{1}{2}$ where we did. However, this particular approach gracefully allows the constraint to be satisfied when \emph{no prize} is assigned to bucket $S_j$ (i.e. the bucket is not used in our solution. In that case $\sum_{k\in[m]} (k+\frac{1}{2})\cdot\tilde{x}_{j,k}$ will equal zero since $\tilde{x}_{j,k}$ will equal 0 for all $k$. We would run into an issue if say, bucket $S_j$ is not used but bucket $S_{j-1}$ is. Accordingly, any solution to Problem \ref{integer_program} not using $r$ buckets must leave the lowest number sets empty, in accordance with the convention from Problem \ref{main_problem}.

Finally, note that we did not include any inequalities to ensure that each bucket contains a contiguous range of winners -- for examples winners 6 through 11 or 54 through 75. We can avoid this costly constraint because it is implicitly enforced by our choice of objective function and the monotonicity of ideal payouts. Suppose some bucket contains a non-contiguous set of winners. Then for at least two winning positions $i > j$ it must be that the prize given to winner $i$, $p_{low}$, is less than the prize given to winner $j$, $p_{high}$. At the same time, we know that for our ideal payouts, $\pi_i \geq \pi_j$. We argue that such a configuration cannot exist in an optimal solution to Problem \ref{integer_program} because switching the prizes so that $i$ receives $p_{high}$ and $j$ receives $p_{low}$ gives a strictly better solution.

First off, clearly switching the prizes will not effect our other constraints: total payout amount and bucket sizes will remain unchanged. Now, our cost function is simply additive over each winner, and the cost incurred by players $i$ and $j$ in the non-contiguous ordering is:
\begin{align*}
(\pi_i - p_{low})^2 + &(\pi_j - p_{high})^2 = \\ &\pi_i^2 - 2\pi_ip_{low} + p_{low}^2 + \pi_j^2 - 2\pi_jp_{high} + p_{high}^2.
\end{align*}
When the prizes are switched, the cost is:
\begin{align*}
(\pi_i - p_{high})^2 + &(\pi_j - p_{low})^2 = \\&\pi_i^2 - 2\pi_ip_{high} + p_{high}^2 + \pi_j^2 - 2\pi_jp_{low} + p_{low}^2.
\end{align*}
The difference in cost after the switch is therefore equal to:
\begin{align*}
- 2\pi_ip_{high} - &2\pi_jp_{low} - (- 2\pi_ip_{low} - 2\pi_jp_{high}) \\
&= -2\left(\pi_ip_{high} -\pi_ip_{low} + \pi_jp_{low} -\pi_jp_{high}\right)\\
&= -2(\pi_i - \pi_j)(p_{high} -p_{low}) \leq 0.
\end{align*}
The last step follows from the fact that $(\pi_i - \pi_j) \geq 0$ and $(p_{high} -p_{low}) > 0$. Since switching payouts reduces our cost, it follows that any solution without contiguous bucket members cannot be optimal. Note that, even with monotonic ideal payouts, this would not have been the case if we had used, for example, $\ell_1$ cost $\sum_{i\in[N],j\in[r],k\in[m]} x_{i,j,k}\cdot |\pi_i - p_k|$.

With this last implicit constraint verified, we conclude the proof that Problem \ref{integer_program} can be used to obtain a valid payout structure obeying all of the constraints of our original Problem \ref{main_problem}.
\end{proof}
\end{document}